    \theoremstyle{plain}
    \newtheorem{theorem}{Theorem}[section]
    \newtheorem{proposition}[theorem]{Proposition}
    \newtheorem{lemma}[theorem]{Lemma}
    \theoremstyle{definition}
    \newtheorem{remark}[theorem]{Remark}
    \newtheorem{assumption}[theorem]{Assumption}
    \newcommand{\E}{\mathbb{E}}
    \newcommand{\R}{\mathbb{R}}
    \newcommand{\Var}{\mathrm{Var}}
    \newcommand{\1}{\mathbf{1}}
    \newcommand{\PP}{\mathbb{P}}
\begin{document}

\title{Change Point Detection and Mean-Field Dynamics of Variable Productivity Hawkes Processes}
\date{December 2025} 
\author[1*]{Conor Kresin}
\author[1]{Boris Baeumer}
\author[2]{Sophie Phillips}
\affil[1]{University of Otago, New Zealand}
\affil[2]{University of California, Los Angeles}
\affil[*]{Corresponding author, conor.kresin@otago.ac.nz }

\maketitle

\begin{abstract}
    Many self-exciting systems change because endogenous amplification, as opposed to exogenous forcing, varies. We study a Hawkes process with fixed background rate and kernel, but piecewise time-varying productivity. For exponential kernels we derive closed-form mean-field relaxation after a change and a deterministic surrogate for post-change Fisher information, revealing a boundary layer in which change time information localises and saturates, while post-change level information grows linearly beyond a short transient. These results motivate a Bayesian change point procedure that stabilizes inference on finite windows. We illustrate the method on invasive pneumococcal disease incidence in The Gambia, identifying a decline in productivity aligned with pneumococcal conjugate vaccine rollout.
    \end{abstract}
    
\textbf{Keywords:} Change point detection, Contagious disease spread, Hawkes process, Information geometry, Mean field dynamics

    \section{Introduction}
    
    The Hawkes point process is a canonical model for clustered events across seismology, finance, and contagion \cite{bacryHawkesProcessesFinance2015,hawkes1971spectra,ogata2006space}. It represents the conditional rate as the sum of an exogenous background and an endogenous offspring component triggered by prior events. Variants such as the recursive Hawkes process \cite{schoenberg2019recursive} and the HawkesN model \cite{rizoiu2018sir} have been used to model the spread of infectious diseases including Chlamydia \cite{paik2022nonparametric}, meningococcal disease \cite{meyerSpacetimeConditionalIntensity2012}, and COVID-19 \cite{chiang2020hawkes,garetto2021time,phillipsDetectionSurgesSARSCov22025}. 
    
    We focus on systems whose dynamics change because endogenous amplification changes. In Hawkes models, this amplification is the productivity (equivalently, branching ratio) $\kappa(t)$, the expected number of offspring per event. In epidemiological settings $\kappa(t)$ parallels a time-varying reproduction factor, and its estimation is central to surveillance and decision making following interventions or behavioural shifts \cite{bertozziChallengesModelingForecasting2020b,bootsmaEffectPublicHealth2007,chiang2020hawkes, wallingaDifferentEpidemicCurves2004}. Point process and related models have long been used to estimate reproduction numbers using methods such as spline based Poisson regression \cite{coriNewFrameworkSoftware2013,hong2020estimation}. While the classical Hawkes model imposes $\kappa<1$ for stationarity \cite{hawkes1971spectra}, finite observation windows in epidemics naturally include explosive regimes with $\kappa>1$; such explosive regimes have been explored as an early warning signal for surges \cite{phillipsDetectionSurgesSARSCov22025}. Our formulation can accommodate explosive regimes, but in this paper we restrict change point inference to subcritical segments unless change times are known or externally constrained; see Section~\ref{sec:stability}.
    
    Nonstationary Hawkes variants allow either the background rate or the productivity to vary over time \cite{briz2025self,harte2014etas,kumazawa2014nonstationary,omi2017hawkes,roueff2019time,zhuang2019semiparametric}. Parameter estimation for such processes is closely related to the dual problem of change point detection, where the goal is to localize structural breaks in Hawkes dynamics. Online approaches adapt CUSUM or score statistics and often assume known pre  and post-change parameters \cite{wang2023sequential,zhang2023online}; sequential likelihood ratio tests have also been proposed, though they can be computationally intensive due to repeated expectation maximisation (EM) estimation \cite{li2017detecting}. Offline methods include penalized dynamic programming in discrete time \cite{wangDetectingAbruptChanges2024} and continuous time dynamic programming that mitigates memory effects via time scaling \cite{dion2024multiple}. Bayesian formulations appear both as state space models that infer latent states and time-varying productivity \cite{koyamaEstimatingTimevaryingReproduction2021} and as predictive deviation schemes that flag departures \cite{zhang2024conjugate}. Much of this literature optimizes for localizing when changes occur. By contrast, our interest is to characterize what changes and how a new regime evolves once a change has occurred, both locally and globally.

    We adopt a piecewise specification for $\kappa(t)$ while keeping the background rate and kernel parameters fixed across segments. This directly targets endogenous amplification and reduces confounding between baseline intensity and productivity. The choice is substantive rather than cosmetic: Proposition~\ref{prop:no-reparam} shows that time variation in $\kappa(t)$ cannot, path wise, be absorbed into a deterministic time-varying background.
    
    Variable productivity (or more generally nonstationary Hawkes processes) do not necessarily imply the existence of change points. We write $\kappa(t)=f(t;\eta(t))$ for a parametric form $f$. A variable productivity model allows $\kappa(t)$ to evolve with $\eta$ constant in time. In contrast, a change point model is needed when $\eta(t)$ is not constant in time, i.e. the parameter path is piecewise constant in time (reference Figure~\ref{fig:kappa-cp-illustration}). change points are therefore jumps in parameters, not necessarily discontinuities in $\kappa(t)$. In many cases,  $\kappa(t)$ is continuous (functionally equal on the left and right of a given change point), while the governing $\eta(t)$ jumps. In principle, any $\kappa$ could be represented by infinitely many change points, but such a parameterisation is ill-posed for inference because it is infinite-dimensional. We therefore focus on the low-dimensional setting, wherein we assume a small finite number of change points. In such a regime, $\kappa(t)$ cannot be specified without the change point times $\{\gamma_j\}$.

      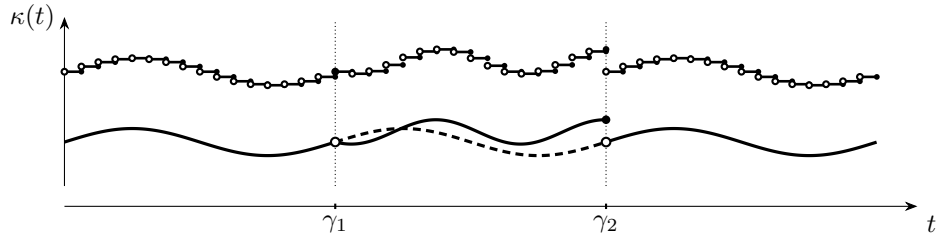
\begin{figure}[t]
    \centering
    \begin{tikzpicture}[scale=0.9,>=Stealth]
     
      \draw[->] (0,0) -- (12.6,0) node[below right] {$t$};
      \draw[->] (0,0.3) -- (0,2.8) node[left] {$\kappa(t)$};

      \def\cpA{4}
      \def\cpB{8}
      \draw[thick] (\cpA,0.04) -- (\cpA,-0.04); \node[below] at (\cpA,0) {$\gamma_1$};
      \draw[thick] (\cpB,0.04) -- (\cpB,-0.04); \node[below] at (\cpB,0) {$\gamma_2$};

      \draw[densely dotted] (\cpA,0.3) -- (\cpA,2.75);
      \draw[densely dotted] (\cpB,0.3) -- (\cpB,2.75);

      \def\c{0.95}
      \def\A{0.20}
      \def\phi{0.00}
      \def\cM{1.10}
      \def\AM{0.18}
      \def\phiM{0.35}
      \def\k{pi/2}
      \def\kM{4*pi/5}

      \draw[very thick]
        plot[domain=0:\cpA, samples=300] (\x,{\c + \A *sin(deg(\k*\x + \phi))});

      \draw[very thick,densely dashed]
        plot[domain=\cpA:\cpB, samples=300] (\x,{\c + \A *sin(deg(\k*\x + \phi))});

      \draw[very thick]
        plot[domain=\cpA:\cpB, samples=300] (\x,{\cM + \AM*sin(deg(\kM*\x + \phiM))});

      \draw[very thick]
        plot[domain=\cpB:12, samples=300] (\x,{\c + \A *sin(deg(\k*\x + \phi))});

      \pgfmathsetmacro{\ysAL}{\c  + \A *sin(deg(\k*\cpA  + \phi))}   
      \pgfmathsetmacro{\ysAM}{\cM + \AM*sin(deg(\kM*\cpA + \phiM))} 
      \pgfmathsetmacro{\ysBM}{\cM + \AM*sin(deg(\kM*\cpB + \phiM))}  
      \pgfmathsetmacro{\ysBR}{\c  + \A *sin(deg(\k*\cpB  + \phi))}   
      \fill[black]                  (\cpA,\ysAL) circle (1.8pt);    
      \fill[white,draw=black,thick] (\cpA,\ysAM) circle (1.8pt);    
      \fill[black]                  (\cpB,\ysBM) circle (1.8pt);    
      \fill[white,draw=black,thick] (\cpB,\ysBR) circle (1.8pt);

      \def\h{0.25}     
      \def\eps{0.0001}  
      \def\sep{1.04}   
    \foreach \x in {0,0.25,...,3.75}{
        \pgfmathparse{\c + \A *sin(deg(\k*(\x-\eps) + \phi)) + \sep}
        \let\y\pgfmathresult
        \draw[line width=1pt,black] (\x,\y) -- (\x+\h,\y);
        \fill[white,draw=black,thick](\x,\y) circle (1.2pt);
        \fill[black]                 (\x+\h,\y) circle (1.2pt);
      }
    
      \pgfmathsetmacro{\yPostA}{\cM + \AM*sin(deg(\kM*(\cpA+\eps) + \phiM)) + \sep}
      \draw[line width=1pt,black] (\cpA,\yPostA) -- (\cpA+\h,\yPostA);
      \fill[white,draw=black,thick](\cpA,\yPostA) circle (1.2pt);
      \fill[black]                 (\cpA+\h,\yPostA) circle (1.2pt);
  
      \foreach \x in {4.25,4.5,...,7.75}{
        \pgfmathparse{\cM + \AM*sin(deg(\kM*(\x-\eps) + \phiM)) + \sep}
        \let\y\pgfmathresult
        \draw[line width=1pt,black] (\x,\y) -- (\x+\h,\y);
        \fill[white,draw=black,thick](\x,\y) circle (1.2pt);
        \fill[black]                 (\x+\h,\y) circle (1.2pt);
      }
 
      \pgfmathsetmacro{\yPostB}{\c + \A *sin(deg(\k*(\cpB+\eps) + \phi)) + \sep}
      \draw[line width=1pt,black] (\cpB,\yPostB) -- (\cpB+\h,\yPostB);
      \fill[white,draw=black,thick](\cpB,\yPostB) circle (1.2pt);
      \fill[black]                 (\cpB+\h,\yPostB) circle (1.2pt);

      \foreach \x in {8.25,8.5,...,11.75}{
        \pgfmathparse{\c + \A *sin(deg(\k*(\x-\eps) + \phi)) + \sep}
        \let\y\pgfmathresult
        \draw[line width=1pt,black] (\x,\y) -- (\x+\h,\y);
        \fill[white,draw=black,thick](\x,\y) circle (1.2pt);
        \fill[black]                 (\x+\h,\y) circle (1.2pt);
      }
 
      \pgfmathsetmacro{\yALs}{\c  + \A *sin(deg(\k*(\cpA-\eps) + \phi)) + \sep}   
      \pgfmathsetmacro{\yARs}{\cM + \AM*sin(deg(\kM*(\cpA+\eps) + \phiM)) + \sep} 
      \fill[black]                 (\cpA,\yALs) circle (1.4pt);

      \pgfmathsetmacro{\yBLs}{\cM + \AM*sin(deg(\kM*(\cpB-\eps) + \phiM)) + \sep} 
      \pgfmathsetmacro{\yBRs}{\c  + \A *sin(deg(\k*(\cpB+\eps) + \phi)) + \sep}  
      \fill[black]                 (\cpB,\yBLs) circle (1.4pt);
    
    \end{tikzpicture}
    \caption{Variable $\kappa(t)$ with change points at $\gamma_1,\gamma_2$. Left continuous step and solid smooth functions represent $\kappa(t)$ across two change points at $\gamma_1$ and $\gamma_2$. The dashed line in $[\gamma_1,\gamma_2]$ represents a single nonstationary process with no change points.}
    \label{fig:kappa-cp-illustration}
    \end{figure}

    Our main contribution is an analytic description of the post-change regime that translates into concrete guidance for inference. Using mean field analysis, we derive explicit relaxation laws that quantify how the mean rate approaches its new level and show that information concentrates in a boundary layer immediately after the change and that information for post-change levels grows essentially linearly once beyond a short transient, whereas information for the change time itself saturates with window length. We also establish that parameters acting only on short segments are intrinsically weakly identified, which explains the ridge geometry seen between productivity level and change time. These elements motivate a Bayesian change point procedure that is both principled and numerically stable on finite windows. Mean field characterizations for Hawkes processes are well developed \cite{chevallier2017mean,clinet2017statistical,dassios2011dynamic,duarte2019stability}; our contribution is to bridge these ideas to nonstationary and change point settings and to use the resulting information geometry to design an estimation and model selection procedure.
    
    We implement a likelihood based method for offline detection of change points in the productivity function $\kappa(t)$ under a simple parametric piecewise specification while sharing background and kernel parameters across segments. This model specification stabilizes estimates, focuses attention on changes in endogenous amplification, and retains enough flexibility to capture epidemic waves arising from interventions, holidays, or behavioural shifts. The parametric form of $\kappa(t)$ is easy to specify, regularize, calibrate, and forecast, and it facilitates closed form mean field calculations for exponential kernels. We illustrate the approach with simulations and an application to contagious pneumococcal disease in The Gambia.
    
    Section~\ref{sec:model} defines the model and contrasts the variable susceptibility and variable infectivity variants. Section~\ref{sec:mean-dynamics} develops the mean dynamics and relaxation formulas. Section~\ref{sec:info} shows boundary layer localization of information, and Section~\ref{sec:mf-info} introduces a mean field surrogate for post-change information that distinguishes linear growth for levels from saturation for change time. Section~\ref{sec:ident} formalizes short regime identifiability and Section~\ref{sec:stability} studies global stability. Section~\ref{sec:estimation} builds the boundary correct likelihood and the relaxation aware Bayesian procedure, followed by simulations and the application in Section~\ref{sec:app}.

    \section{Model and standing assumptions}
    \label{sec:model}
    
    Let $N$ be a (possibly marked) simple point process on $\R_+$ with history $\mathcal H_t$. We consider the Hawkes specification
    \begin{equation}
    \label{eq:Hawkes-time-of-use}
    \lambda(t\mid \mathcal H_t)=\lambda_0  + \kappa(t;\eta_j)\sum_{t_i<t} g(t-t_i),
    \end{equation}
    where $g:\R_+\to[0,\infty)$ is measurable with $\int_0^\infty g(s) ds=1$, and $\kappa:\R_+\to[0,\infty)$ is specified piecewise on a partition $0=\gamma_0<\gamma_1<\cdots<\gamma_K=T$.  For more on Hawkes processes see \cite{daley2003introduction,hawkes1971spectra,hawkes1974cluster,reinhart2018review}.
    
    On the temporal segments $\{(\gamma_{j-1},\gamma_j]\}_{j=1}^K$ we write $\kappa(t)=\kappa_j(t;\eta_j)$ with a low-dimensional parameter $\eta_j$ (e.g.\ a level, a short ramp, or spline coefficients). In the case that $\{(\gamma_{j-1},\gamma_j]\}_{j=1}^K$ are unknown, we have a change point detection problem, as discussed in Section \ref{sec:estimation}. A change point time $\gamma$ is formally defined as a time where $\eta(\gamma-)\neq\eta(\gamma+)$.
    
    An alternative ``variable infectivity'' variant ties productivity to the parent time:
    \begin{equation}
    \label{eq:Hawkes-fixed-parent}
    \lambda(t\mid \mathcal H_t)=\lambda_0 +\sum_{t_i<t}\kappa(t_i) g(t-t_i).
    \end{equation}
    The variable susceptibility model \eqref{eq:Hawkes-time-of-use} produces mean intensity jumps at step changes in $\kappa$, whereas the variable infectivity model \eqref{eq:Hawkes-fixed-parent} yields a mean intensity that is continuous at change points, as the step in productivity is smoothed by the convolution with the kernel. The variable infectivity specification aligns with the variable productivity epidemic type aftershock (ETAS) model specification of \cite{harte2014etas}. In the context of contagious disease spread, the variable susceptibility variant reflects a temporal change in the transmission rate whereas the variable infectivity variant reflects a change in the infectivity of a newly infected.  With exponential kernels, both specifications share the same relaxation rate within subcritical segments. 
    All results in this paper extend to marked Hawkes processes, including spatiotemporal processes. In the marked case, we specify $\kappa(t,m)$, and perform change point detection over both time and mark space.
    
    \begin{assumption}[Standing conditions]
    \label{ass:standing}
    Throughout we assume:
    \begin{enumerate}[leftmargin=2em,itemsep=0.25em]
    \item[(i)] $\lambda(t)\ge\lambda_0>0$ so that logs are well defined wherever they appear.
    \item[(ii)] $\kappa(\cdot)\ge 0$ is a function of bounded variation.
    \end{enumerate}
    \end{assumption}
    Throughout, we define the log-likelihood on any finite window $[a,b]$ as \begin{equation}\label{eqn:liklihoodAB}
    \ell_{[a,b]}(\theta)
    =\int_a^b \log \lambda(t;\theta) dN(t)-\int_a^b \lambda(t;\theta) dt,
    \end{equation} which is well defined for any $\theta=(\lambda_0,\{\eta\}_{j=1}^K, \{\gamma_j\}_{j=1}^k,\theta_g)\subseteq \Theta$ where $\theta_g$ is the kernel parameters, and $\Theta$ compact.
    
    \begin{proposition}[Predictable intensity and Campbell formula]
    \label{ass:campbell}
    The (possibly marked) point process $N$ on $\mathbb R_+$ is adapted to $\{\mathcal H_t\}$ and admits an $\{\mathcal H_t\}$  predictable intensity $\lambda(t)$ such that, for every $T<\infty$,
    \[
    \int_0^T \lambda(t) dt<\infty\quad\text{a.s.}.
    \]
    Then for any non negative or integrable  $\{\mathcal H_t\}$  predictable function $f(t)$ and any $T<\infty$,
    \[
    \E \left[\int_{(0,T]} f(t) N(dt)\right]
    =
    \E \left[\int_0^T  f(t) \lambda(t)dt\right]
    \]
    see \cite[Proposition 14.2.1]{daley2007introduction} for the proof and 
    \cite{kresin2023parametric,van2010spatial} for technical details. The same identity holds conditionally given $\mathcal H_a$ whenever $f$ is supported on $(a,T]$.
    \end{proposition}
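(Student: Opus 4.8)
The plan is to take the unconditional identity as the cited result (Daley \& Vere-Jones, Prop.~14.2.1), noting that its only hypothesis — the a.s.\ finiteness of the compensator $\int_0^T\lambda(t)\,dt$ — is met here: from $\int_0^\infty g=1$, $\kappa$ of bounded variation (hence bounded on the compact window) and $N$ a.s.\ locally finite, one bounds $\int_0^T\lambda(t)\,dt\le \lambda_0 T+\|\kappa\|_\infty\,N([0,T))<\infty$ a.s. For completeness I would sketch the standard monotone-class proof of the unconditional formula rather than rely purely on the citation: it holds for elementary predictable integrands $f(t)=\1_A\,\1_{(s,u]}(t)$ with $A\in\mathcal H_s$, since both sides then equal $\E[\1_A(\Lambda(u)-\Lambda(s))]$ with $\Lambda$ the $\{\mathcal H_t\}$-compensator of $N$; it extends to all non-negative predictable $f$ by linearity and monotone convergence on the predictable $\sigma$-algebra, and to integrable predictable $f$ by splitting into positive and negative parts.

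The new content is the conditional extension, and for this I would use the $\mathcal H_a$-measurable multiplier trick. Fix $a<T$ and let $f$ be $\{\mathcal H_t\}$-predictable and supported on $(a,T]$. For any bounded, non-negative, $\mathcal H_a$-measurable random variable $Z$, the process $t\mapsto Z\,\1_{(a,\infty)}(t)\,f(t)$ is again $\{\mathcal H_t\}$-predictable: the map $t\mapsto Z\,\1_{(a,\infty)}(t)$ is left-continuous and adapted (because $Z$ is $\mathcal H_a\subseteq\mathcal H_t$-measurable for $t>a$), hence predictable, and a product of predictable processes is predictable. Applying the unconditional Campbell formula to this process gives
\[
\E\!\Big[Z\!\int_{(a,T]} f(t)\,N(dt)\Big]=\E\!\Big[Z\!\int_a^T f(t)\,\lambda(t)\,dt\Big].
\]
Since this holds for every bounded non-negative $\mathcal H_a$-measurable $Z$, the defining property of conditional expectation yields
\[
\E\!\Big[\int_{(a,T]} f(t)\,N(dt)\,\Big|\,\mathcal H_a\Big]=\E\!\Big[\int_a^T f(t)\,\lambda(t)\,dt\,\Big|\,\mathcal H_a\Big]\qquad\text{a.s.}
\]

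For the integrability bookkeeping: in the non-negative case I would first run the argument with $f\wedge n$ and $Z\wedge n$ in place of $f$ and $Z$, then pass $n\to\infty$ by monotone convergence (the identity survives even when both sides are $+\infty$); in the integrable case $Z$ is already bounded, so the displayed equality is immediate and one removes the boundedness of $Z$ by dominated convergence with envelope $|f|$. The main — essentially the only — technical point is this predictability check: that multiplying a predictable integrand by an $\mathcal H_a$-measurable scalar and restricting it to $(a,\infty)$ preserves predictability. An alternative route is to observe that $M_t:=\int_{(a,t]}f\,dN-\int_a^t f\,\lambda\,ds$ is a mean-zero $\{\mathcal H_t\}_{t\ge a}$-martingale with $M_a=0$, so $\E[M_T\mid\mathcal H_a]=0$; but verifying that $M$ is a genuine (uniformly integrable) martingale reduces to the same estimates, so I would present the multiplier argument as the primary proof.
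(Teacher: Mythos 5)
Your argument is correct, and it actually does more than the paper, which offers no proof of its own: the unconditional identity is simply deferred to \cite[Proposition 14.2.1]{daley2007introduction}, and the conditional statement for $f$ supported on $(a,T]$ is asserted without justification. Your monotone-class sketch for the unconditional formula is exactly the route behind the cited result, so there is no divergence there; the genuine addition is the conditional extension, and your multiplier argument is the right one. The key predictability check is sound: $t\mapsto Z\,\1_{(a,\infty)}(t)$ is left-continuous and adapted (since $Z$ is $\mathcal H_a$-measurable and $\mathcal H_a\subseteq\mathcal H_t$ for $t>a$), hence predictable, and products of predictable processes are predictable, so the unconditional Campbell formula applied to $Z\,\1_{(a,\infty)}f$ plus the defining property of conditional expectation gives the conditional identity; your truncation and monotone/dominated convergence bookkeeping for the non-negative and integrable cases is also fine (indeed, for non-negative $f$ you could skip truncating $f$ altogether, since the unconditional formula already holds in $[0,\infty]$ for non-negative predictable integrands). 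One small remark: the a.s.\ finiteness of $\int_0^T\lambda(t)\,dt$ is a hypothesis of the proposition, so your verification via $\lambda_0T+\|\kappa\|_\infty N([0,T))$ is not needed, and as stated it quietly assumes the pre-window history contributes only finitely many points — harmless in this model, but worth flagging since local finiteness of $N$ is usually what one deduces from compensator finiteness rather than the other way around.
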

    
    In Hawkes modelling, time variation is commonly introduced through the background rate while keeping the productivity constant, e.g. \cite{omi2017hawkes}. This allows for spatially dynamic population covariates to be modelled explicitly, and is often necessary to avoid model misspecification. Our focus is complementary: we allow the productivity $\kappa(t)$ to vary while the background rate is constant. Because $\kappa(t)$ governs the reproduction mechanism (offspring per event), time variation in $\kappa$ captures changes in endogenous amplification rather than exogenous arrival pressure. Further, a step in $\kappa$ produces an instantaneous jump in the mean intensity followed by relaxation, whereas a step in the background leaves the mean continuous. Proposition~\ref{prop:no-reparam} shows there is no path wise re-parameterisation that trades time-variation in $\kappa$ for a deterministic background, so the specification brings genuinely different identification and inference behaviour.
    
    To make these distinctions precise we introduce three recurring quantities:
    \begin{align*}
    Z(t)=&\int_{-\infty}^t g(t-s) dN(s)\quad&\text{(path-wise filtered sum)}\\
    \bar\lambda(t)=&\E[\lambda(t)]\quad&\text{(mean intensity)}\\
    M(t)=&\int_{-\infty}^t g(t-s) \bar\lambda(s) ds\quad&\text{(mean-field filtered intensity)}.
    \end{align*}
    These objects are necessary for a mean-field description of the process described in Equation~\ref{eq:Hawkes-time-of-use}. We proceed by demonstrating that the conditional intensity in Equation~\ref{eq:Hawkes-time-of-use} cannot be reparameterized as a nonstationary background rate process.
    
    \begin{lemma}[Non-degeneracy of $Z(t)$]
    \label{lem:Z-nondegenerate}
    Fix $t>0$ and suppose $\lambda_0>0$. Then $Z(t)$ is non-degenerate: there exists $\varepsilon>0$ such that
    \[
    \PP \left(Z(t)-  \int_{(-\infty,t-\varepsilon]}g(t-s) dN(s)=0\right)\in(0,1).
    \]
    Consequently $\Var[Z(t)]>0$.
    \end{lemma}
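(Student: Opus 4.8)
The plan is to recognise the displayed random variable as the filtered contribution of the events in a short trailing window, and then to show that this window is, with positive probability, both event-free and event-rich. First I would use additivity of the stochastic integral to write, for any $\varepsilon>0$,
\[
Z(t)-\int_{(-\infty,t-\varepsilon]}g(t-s)\,dN(s)=\int_{(t-\varepsilon,t)}g(t-s)\,dN(s)=:W_\varepsilon(t)=\sum_{t-\varepsilon<t_i<t}g(t-t_i),
\]
which is non-negative since $g\ge0$ and vanishes exactly when $N$ has no point in the Borel set $B_\varepsilon:=\{s\in(t-\varepsilon,t):g(t-s)>0\}$. Since $\int_0^\infty g=1$, the set $\{u>0:g(u)>0\}$ has positive Lebesgue measure, so I would fix $\varepsilon\in(0,t]$ small enough that $\mathrm{Leb}(B_\varepsilon)>0$ (for an exponential kernel $g>0$ throughout, so any $\varepsilon<t$ works). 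The lemma then reduces to showing $0<\PP(N(B_\varepsilon)=0)<1$ and upgrading this to $\Var[Z(t)]>0$.

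I would establish both bounds conditionally on $\mathcal H_{t-\varepsilon}$. For the lower bound, on $(t-\varepsilon,t)$ the pre-jump intensity is the $\mathcal H_{t-\varepsilon}$-measurable function $\lambda^\circ(s)=\lambda_0+\kappa(s)\int_{(-\infty,t-\varepsilon]}g(s-u)\,dN(u)$, and by Tonelli together with $\int g\le1$ and $\|\kappa\|_{\infty,[0,t]}<\infty$ (Assumption~\ref{ass:standing}(ii)),
\[
\Lambda_\varepsilon:=\int_{t-\varepsilon}^{t}\lambda^\circ(s)\,ds\;\le\;\lambda_0\varepsilon+\|\kappa\|_{\infty,[0,t]}\,N\big((-\infty,t-\varepsilon]\big)\;<\;\infty\quad\text{a.s.},
\]
since $N$ has finitely many points on the look-back window (Proposition~\ref{ass:campbell} gives $\int_0^T\lambda<\infty$ a.s.). The standard exponential formula for the waiting time to the next event (see, e.g., \cite{daley2003introduction}) then gives $\PP(N(t-\varepsilon,t)=0\mid\mathcal H_{t-\varepsilon})=e^{-\Lambda_\varepsilon}$, which lies in $(0,1)$ a.s.\ because $0<\lambda_0\varepsilon\le\Lambda_\varepsilon<\infty$; hence $\PP(N(B_\varepsilon)=0\mid\mathcal H_{t-\varepsilon})\ge e^{-\Lambda_\varepsilon}>0$. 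For the upper bound, applying the conditional Campbell identity of Proposition~\ref{ass:campbell} to the deterministic predictable integrand $\1_{B_\varepsilon}$, which is supported on $(t-\varepsilon,t]$, gives $\E[N(B_\varepsilon)\mid\mathcal H_{t-\varepsilon}]=\E[\int_{B_\varepsilon}\lambda(s)\,ds\mid\mathcal H_{t-\varepsilon}]\ge\lambda_0\,\mathrm{Leb}(B_\varepsilon)>0$ a.s., so the non-negative integer $N(B_\varepsilon)$ is strictly positive with positive conditional probability, i.e.\ $\PP(N(B_\varepsilon)=0\mid\mathcal H_{t-\varepsilon})<1$ a.s.

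Writing $p:=\PP(N(B_\varepsilon)=0\mid\mathcal H_{t-\varepsilon})$, the two bounds give $p\in(0,1)$ a.s., so $\PP\big(Z(t)-\int_{(-\infty,t-\varepsilon]}g(t-s)\,dN(s)=0\big)=\PP(W_\varepsilon(t)=0)=\E[p]\in(0,1)$, which is the first assertion. For the variance I would use $\Var[Z(t)]\ge\E[\Var(Z(t)\mid\mathcal H_{t-\varepsilon})]$: conditionally on $\mathcal H_{t-\varepsilon}$ one has $Z(t)=Y+W_\varepsilon(t)$ with $Y:=\int_{(-\infty,t-\varepsilon]}g(t-s)\,dN(s)$ constant and $W_\varepsilon(t)$ non-degenerate (it equals $0$ with probability $p\in(0,1)$ and is strictly positive otherwise), so $\Var(Z(t)\mid\mathcal H_{t-\varepsilon})>0$ a.s., and $\Var[Z(t)]>0$ follows.

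The step I expect to require the most care is the a.s.\ finiteness of $\Lambda_\varepsilon$, equivalently of the look-back count $N((-\infty,t-\varepsilon])$: this is automatic when $N$ lives on $\R_+$, but under the stationary-past convention implicit in the notation $\int_{-\infty}^{t}$ it relies on the subcritical regime of Section~\ref{sec:stability}. Everything else — the window decomposition, the exponential formula, and the two invocations of Campbell's formula — is routine, the only mild structural requirement being that $g$ not vanish a.e.\ on $(0,t)$ so that $B_\varepsilon$ can be taken with positive measure (again immediate for the exponential kernel).
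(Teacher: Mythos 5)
Your proof is correct and follows essentially the same route as the paper's: decompose $Z(t)$ over the trailing window $(t-\varepsilon,t)$, show the void probability $\exp\{-\int_{t-\varepsilon}^t\lambda\}$ lies in $(0,1)$ for the lower bound, and use $\lambda\ge\lambda_0>0$ together with positivity of $g$ on a positive-measure subset for the upper bound. Your only departures are refinements the paper leaves implicit — the conditional Campbell identity to force a point into $B_\varepsilon$, and the law of total variance to pass from non-degeneracy of the window increment to $\Var[Z(t)]>0$ — plus the caveat (which you correctly flag) that "$\varepsilon$ small enough" should really be "$\varepsilon$ chosen so that $g$ has positive mass on $(0,\varepsilon)$", which matters only for kernels supported away from the origin.
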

    \begin{proof}
    Choose $\varepsilon>0$ with $\int_{(0,\varepsilon]} g(s) ds>0$ (this exists since $g\ge 0$ and $\int_0^\infty g=1$). Let $A$ be the event that there are no points in $(t-\varepsilon,t)$. By local integrability of $\lambda$ on $(t-\varepsilon,t)$,
    \[
    \PP \big(A\mid\mathcal H_{t-\varepsilon}\big)=\exp \left\{-\int_{t-\varepsilon}^t \lambda(u) du\right\}\in(0,1)\quad\text{a.s.},
    \]
    hence $\PP(A)\in(0,1)$. On $A$ we have $\int_{(t-\varepsilon,t)} g(t-s) dN(s)=0$, so
    \(
    Z(t)=\int_{(-\infty,t-\varepsilon]}g(t-s) dN(s)
    \)
    holds with positive probability. On $A^c$ there is at least one point in $(t-\varepsilon,t)$; since $g(t-s)>0$ on a subset of $(t-\varepsilon,t)$ of positive Lebesgue measure and $\lambda\ge \lambda_0>0$, the probability that a point falls in this subset is strictly positive, so the increment $\int_{(t-\varepsilon,t)} g(t-s) dN(s)$ is strictly positive with positive probability. Therefore the displayed event has probability strictly between $0$ and $1$, and $Z(t)$ is non-degenerate. In particular, $\Var[Z(t)]>0$.
    \end{proof}
    
    \begin{proposition}[No path-wise re-parameterisation]
    \label{prop:no-reparam}
    Consider \eqref{eq:Hawkes-time-of-use}. If $\kappa(\cdot)$ is nonconstant on any interval, there does not exist a deterministic function $\tilde\lambda_0(t)$ and a constant $\check\kappa$ such that
    \[
    \lambda(t)\equiv \tilde\lambda_0(t)+\check\kappa Z(t)\qquad\text{almost surely for all }t.
    \]
    \end{proposition}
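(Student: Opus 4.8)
The plan is to argue by contradiction, with all of the work done by the non-degeneracy statement Lemma~\ref{lem:Z-nondegenerate}. Suppose a deterministic function $\tilde\lambda_0(\cdot)$ and a constant $\check\kappa$ existed with $\lambda(t)\equiv\tilde\lambda_0(t)+\check\kappa Z(t)$ almost surely for all $t$. By \eqref{eq:Hawkes-time-of-use} we also have $\lambda(t)=\lambda_0+\kappa(t) Z(t)$ almost surely, where $Z(t)=\int_{-\infty}^t g(t-s)\,dN(s)=\sum_{t_i<t}g(t-t_i)$ since a simple point process almost surely has no atom at the fixed time $t$. Equating the two representations and rearranging gives, for each fixed $t$,
\[
(\kappa(t)-\check\kappa) Z(t)=\tilde\lambda_0(t)-\lambda_0\qquad\text{almost surely,}
\]
and the right-hand side is non-random.

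Next I would localise to a single time point. Because $\kappa$ is nonconstant, it takes at least two distinct values, so at most one of them equals $\check\kappa$; pick a time $t^\star$ with $\kappa(t^\star)\neq\check\kappa$. At $t^\star$ the displayed identity forces
\[
Z(t^\star)=\frac{\tilde\lambda_0(t^\star)-\lambda_0}{\kappa(t^\star)-\check\kappa}\qquad\text{almost surely,}
\]
that is, $Z(t^\star)$ is almost surely equal to a deterministic constant, hence $\Var[Z(t^\star)]=0$. This contradicts Lemma~\ref{lem:Z-nondegenerate}, which gives $\Var[Z(t)]>0$ at every $t>0$. Therefore no such pair $(\tilde\lambda_0,\check\kappa)$ exists. (In fact the same argument shows more: any valid representation would require $\kappa(t)=\check\kappa$ for all $t$, which is exactly the excluded case.)

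I expect no substantive obstacle; the only points needing care are bookkeeping. First, "almost surely for all $t$" is read as implying the identity holds almost surely at each fixed $t$ — immediate, since a single almost-sure statement ranging over all $t$ is stronger than the family of pointwise ones. Second, one should confirm that \eqref{eq:Hawkes-time-of-use} genuinely puts $\lambda(t)$ in affine-in-$Z(t)$ form with slope exactly $\kappa(t)$, so that differencing the two representations leaves $(\kappa(t)-\check\kappa)Z(t)$ with a non-random right-hand side; this is where Assumption~\ref{ass:standing}(i) ($\lambda_0>0$), already used in Lemma~\ref{lem:Z-nondegenerate}, is invoked. The conceptual content is entirely carried by the non-degeneracy of $Z$: once $Z(t)$ has strictly positive variance, a deterministic-plus-constant-multiple-of-$Z(t)$ form pins the multiplier to $\kappa(t)$ at every $t$, which is incompatible with $\kappa$ varying on any interval.
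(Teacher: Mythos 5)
Your proposal is correct and follows essentially the same route as the paper's own proof: both rearrange the two affine representations of $\lambda(t)$ and invoke Lemma~\ref{lem:Z-nondegenerate} to conclude that a deterministic right-hand side is incompatible with $\Var[Z(t)]>0$ unless $\kappa(t)=\check\kappa$ everywhere. Your version is in fact slightly more explicit than the paper's (you localise to a specific $t^\star$ with $\kappa(t^\star)\neq\check\kappa$ and derive the variance contradiction directly), but the key idea and key lemma are identical.
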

    \begin{proof}
    Assume such $\tilde\lambda_0$ and $\check\kappa$ exist. Then
    \(
    \tilde\lambda_0(t)=\lambda_0+(\kappa(t)-\check\kappa)Z(t).
    \)
    By Lemma~\ref{lem:Z-nondegenerate}, $Z(t)$ is a random $\mathcal H_t$-measurable variable. Thus the right-hand side is random unless $\kappa(t)=\check\kappa$, which contradicts the non-constancy of $\kappa$. Therefore no such path-wise re-parameterisation exists.
    \end{proof}
    
    From a practical standpoint, holding the background rate and kernel parameters fixed while allowing the productivity $\kappa(t)$ to vary is often a natural way to represent real systems. In infectious-disease applications, for example, $\lambda_0$ and $g$ encode relatively stable features such as population covariates and disease properties, whereas $\kappa(t)$ captures changes in endogenous amplification driven by behaviour, policy, or transient contact patterns (e.g., holidays, mass gatherings), including brief explosive periods. Sharing the background and kernel parameters across segments is also statistically stabilizing: it reduces the number of free parameters, mitigates confounding between the baseline and productivity, and allows the kernel scale (e.g.\ $\beta$) to be estimated from the entire record rather than from short local windows. This pooling sharpens curvature for shared parameters, improves mixing in posterior computation, and prevents the kernel from spuriously absorbing short-window variation in $\kappa$.
    
    The sections that follow use mean dynamics to quantify post-change point relaxation, introduce a boundary-correct segment-conditional likelihood that respects carry-over, establish information localization near boundaries, develop a deterministic surrogate for post-change information growth, and derive identifiability limits for short regimes. These components together inform a Bayesian change point procedure that is both principled and stable.
    
    \section{Mean dynamics and relaxation}
    \label{sec:mean-dynamics}
    
    Mean dynamics govern how quickly the system forgets the past and how fast the post-change point mean approaches its new steady state. Such dynamics are explored in detail for stationary Hawkes processes, see \cite{bremaud2002rate,gao2018limit} for a robust description of Hawkes processes with an exponential kernel, field master equation for general kernels \cite{kanazawa2020field} and fractional Hawkes processes \cite{hainaut2020fractional}, shot noise characterizations \cite{dassios2011dynamic}, and multivariate Hawkes settings with asymptotic results \cite{chevallier2017mean,clinet2017statistical,duarte2019stability}. 
    
    Our study of mean field dynamics in the context of a varying productivity Hawkes model allows a practitioner to determine how long a post-change window must be to accumulate meaningful evidence for a level change. At the start of any finite observation window there is, by default, a boundary: unless the process is truly observed from inception (or we condition on the infinite past), pre-window events enter as an unknown carry-over that acts like an initial change. The formulas below therefore treat the left edge of the window as a boundary layer and quantify the associated relaxation. These results serve both forward modelling and calibration, and they feed into the information calculations, identifiability limits, and prior choices used later.
    
    Taking expectations in \eqref{eq:Hawkes-time-of-use} and using $\E[dN(t)\mid\mathcal H_t]=\lambda(t) dt$, given Proposition \ref{ass:campbell}, yields the linear Volterra system
    \begin{equation*}
    \label{eq:Volterra}
    \bar\lambda(t)=\lambda_0+\kappa(t) M(t),
    \qquad
    M(t)=\int_{-\infty}^t g(t-s) \bar\lambda(s)\, ds
    \end{equation*}
    or
    \begin{equation*}
    \bar\lambda(t)=\lambda_0+\kappa(t) \int_{-\infty}^t g(t-s) \bar\lambda(s)\, ds,
    \qquad
    M(t)=\lambda_0+\int_{-\infty}^t g(t-s)\kappa(s) M(s)\, ds.
    \end{equation*}
    Similar mean field representations of Hawkes processes are derived in detail in \cite{chevallier2017mean,dassios2011dynamic}. Both equations can be solved efficiently using standard numerical techniques. In the case where
    \[g(s)=\beta e^{-\beta s}\1_{\{s>0\}}\] 
    is an exponential kernel we can solve these equations explicitly, making further analysis easily tractable. In this case differentiating $M$ gives
    \(
    M'(t)=-\beta M(t)+\beta \bar\lambda(t)
    \),
    so
    \begin{equation}
    \label{eq:M-ode}
    M'(t)=\beta\lambda_0+\beta(\kappa(t)-1)M(t),\qquad
    \bar\lambda(t)=\lambda_0+\kappa(t)M(t).
    \end{equation}
    Solving \eqref{eq:M-ode} by variation of constants yields, for $t\ge t_0$,
    \begin{equation}
        \label{eq:odesol}
        M(t) =
    e^{-\beta \int_{t_0}^t (1-\kappa(s) )\, ds}
    M(t_0) + \beta \lambda_0
    \int_{t_0}^t
    e^{-\beta \int_{u}^t (1-\kappa(r))\, dr}
    du.
    \end{equation}
    The first term describes exponential forgetting of the initial condition, at a rate governed by the path-average of $1-\kappa(s)$; the second term describes the pull towards the current reproduction regime.
    
    In particular, if $\kappa\neq 1$ is constant on an interval $t>t_0$, then
    \begin{equation}
    \label{eq:M-const-solution}
    M(t)=\frac{\lambda_0}{1-\kappa}
    +\Big(M(t_0^+)-\frac{\lambda_0}{1-\kappa}\Big)e^{-\beta(1-\kappa)(t-t_0)},\quad
    \bar\lambda(t)=\frac{\lambda_0}{1-\kappa}
    +\Big(\bar\lambda(t_0^+)-\frac{\lambda_0}{1-\kappa}\Big)e^{-\beta(1-\kappa)(t-t_0)},
    \end{equation}
    whereas in the critical case $\kappa=1$ we obtain
    \[
    M(t)=M(t_0^+)+\beta\lambda_0(t-t_0),\qquad
    \bar\lambda(t)=\bar\lambda(t_0^+)+\beta\lambda_0(t-t_0).
    \]
    For $\kappa<1$, \eqref{eq:M-const-solution} shows that $M$ and $\bar\lambda$ relax exponentially towards the steady state $\Lambda(\kappa):=\lambda_0/(1-\kappa)$, with relaxation time
    \[
    \tau=\frac{1}{\beta(1-\kappa)}.
    \]
    $\tau$ is the time it takes for the transient term to shrink by a factor $e$. Near-critical segments ($\kappa\approx 1$) have long relaxation times and therefore remember the pre-change regime much longer than subcritical segments. Practically, \(\tau\) is the time scale on which the system ``forgets'' a change. After about one \(\tau\), the mean has moved most of the way towards the new steady state; after a few \(\tau\)'s, the influence of the old regime is negligible. Near-critical regimes (\(\kappa\approx 1\)) have very large \(\tau\), so they carry a long memory of what happened before the change. In epidemic terms, a near-critical phase behaves like a system that keeps ``echoing'' past transmission conditions long after an intervention, whereas a strongly subcritical phase forgets quickly. 
    
    \begin{remark}[Variable infectivity mean dynamics]
    \label{rem:fixed-parent-mf}
    For the variable infectivity specification \eqref{eq:Hawkes-fixed-parent} with exponential kernel $g(s)=\beta e^{-\beta s}\mathbbm{1}_{\{s>0\}}$ it is convenient to define
    \[
    Z_{\mathrm{fix}}(t)
    =\sum_{t_i<t}\kappa(t_i)\,g(t-t_i),\qquad
    M_{\mathrm{fix}}(t)=\E[Z_{\mathrm{fix}}(t)],\qquad
    \bar\lambda_{\mathrm{fix}}(t)=\lambda_0+M_{\mathrm{fix}}(t).
    \]
    Then $Z_{\mathrm{fix}}$ satisfies
    \[
    dZ_{\mathrm{fix}}(t)
    =-\beta Z_{\mathrm{fix}}(t)\,dt+\beta\kappa(t)\,dN(t),
    \]
    so taking expectations gives
    \[
    M_{\mathrm{fix}}'(t)
    =-\beta M_{\mathrm{fix}}(t)+\beta\kappa(t)\bar\lambda_{\mathrm{fix}}(t)
    =\beta\kappa(t)\lambda_0+\beta\{\kappa(t)-1\}M_{\mathrm{fix}}(t),
    \qquad
    \bar\lambda_{\mathrm{fix}}(t)=\lambda_0+M_{\mathrm{fix}}(t).
    \]
    On any interval where $\kappa(t)\equiv\kappa\in[0,1)$ is constant this reduces to
    \[
    M_{\mathrm{fix}}'(t)=\beta\kappa\lambda_0+\beta(\kappa-1)M_{\mathrm{fix}}(t),
    \]
    with solution
    \[
    M_{\mathrm{fix}}(t)
    =\frac{\kappa\lambda_0}{1-\kappa}
    +\Big(M_{\mathrm{fix}}(t_0^+)-\frac{\kappa\lambda_0}{1-\kappa}\Big)
    e^{-\beta(1-\kappa)(t-t_0)}
    \]
    and hence
    \[
    \bar\lambda_{\mathrm{fix}}(t)
    =\frac{\lambda_0}{1-\kappa}
    +\Big(\bar\lambda_{\mathrm{fix}}(t_0^+)-\frac{\lambda_0}{1-\kappa}\Big)
    e^{-\beta(1-\kappa)(t-t_0)}.
    \]
    Thus on each constant-productivity segment the variable susceptibility intensity \eqref{eq:Hawkes-time-of-use} and the variable infectivity intensity \eqref{eq:Hawkes-fixed-parent} share the same stationary mean $\Lambda(\kappa)=\lambda_0/(1-\kappa)$ and relaxation time $\tau=1/[\beta(1-\kappa)]$. The only qualitative difference is that the variable susceptibility mean intensity jumps at a step in $\kappa$, whereas the variable infectivity mean is continuous and then relaxes with the same rate. Consequently, the mean-field information results that depend only on $\Lambda(\kappa)$ and $\tau$ (boundary-layer width, linear growth of post-change information for levels, saturation and locality of change-time information) carry over to the variable infectivity model. Because there is no instantaneous jump under the variable infectivity dynamics, the variable susceptibility change-time calculations should be interpreted as an optimistic benchmark, in the mean-field approximation, for how sharply a change point can be localised under the variable infectivity specification.
    \end{remark}

    The solution \eqref{eq:odesol} also makes clear that, when $\kappa(t)$ remains uniformly below one, the influence of any initial condition $M(t_0)$ decays at least as fast as 
    \[
    \exp\Bigl\{-\beta\inf_{[t_0,t]}(1-\kappa(s))(t-t_0)\Bigr\},
    \]
    so that past events are effectively forgotten after a few local relaxation times.
    
    When $\kappa(t)$ varies slowly relative to the relaxation time, $M(t)$ tracks the ``instantaneous'' fixed point $\Lambda(\kappa(t))=\lambda_0/(1-\kappa(t))$ up to a small lag. Formally, differentiating $\Lambda(\kappa(t))$ and comparing with \eqref{eq:M-ode} suggests the dimensionless ``slow-variation'' condition
    \[
    \frac{|\kappa'(t)|}{\beta(1-\kappa(t))^2}\ll 1,
    \]
    under which the relative deviation $|\bar\lambda(t)-\Lambda(\kappa(t))|/\Lambda(\kappa(t))$ remains small. This explains why near-critical segments (small $1-\kappa$) both relax slowly and exhibit larger tracking error when $\kappa$ is allowed to change rapidly. Operationally, these dynamics imply that the quality of any estimator depends on the relaxation timescale. Specifically, this motivates using post-change windows of length at least a few $\tau$ to ensure the log-likelihood accumulates sufficient curvature for identification.
    
    When $\kappa$ jumps from $\kappa_1$ to $\kappa_2$ at $t^\ast$ and the pre-change mean is in steady state, we have $M(t^\ast)=\Lambda(\kappa_1)=:\Lambda_1$ and hence
    \[
    \bar\lambda(t^{\ast-})=\Lambda_1,\qquad
    \bar\lambda(t^{\ast+})=\lambda_0+\kappa_2\Lambda_1.
    \]
    Subtracting the post-change steady state $\Lambda_2:=\Lambda(\kappa_2)$ gives
    \begin{equation}
    \label{eq:jump-offset}
    \bar\lambda(t^{\ast+})-\Lambda_2
    =\lambda_0 \frac{\kappa_2(\kappa_1-\kappa_2)}{(1-\kappa_1)(1-\kappa_2)},
    \end{equation}
    followed by exponential relaxation at rate $\beta(1-\kappa_2)$. Thus a step in $\kappa$ produces an instantaneous jump in the mean intensity, whose magnitude depends on both the pre- and post-change regimes, and then an exponential smoothing governed by the post-change relaxation time. These features are the core of the information localization and surrogate information calculations that follow.

    \subsection{Information localisation near changes}
    \label{sec:info}
    
    Change point detection succeeds when information about local parameters concentrates near the change point. Our goal is to understand where along the time axis the data are actually informative about post-change parameters. The answer is sharply localized: for parameters that act only through the post-change productivity $\kappa_j$ on a segment $(\gamma_{j-1},\gamma_j]$, the Fisher information density is largest immediately after $\gamma_{j-1}$ and then decays on the kernel time scale. (Equivalently, if $\kappa_j$ is not constant, this applies to $\eta_j$). In other words, most of the information about a new regime is generated in the first few relaxation times after the change; observations much later in the same segment behave more like steady-state data. 
    Thus short post-change windows can suffice for detection, and early events dominate likelihood ratios between nearby partitions. Moreover, data collected past a few relaxation times will not contribute much towards estimating $\gamma_{j-1}$.
    
    Fix a segment $(\gamma_{j-1},\gamma_j]$ and decompose $Z(t)=B_j(t)+W_j(t)$ into a carry-over term from the pre-segment history and a within-segment contribution. For $g(s)=\beta e^{-\beta s}\1_{\{s>0\}}$ and $t>\gamma_{j-1}$,
    \[
    B_j(t)=e^{-\beta(t-\gamma_{j-1})}\sum_{t_i<\gamma_{j-1}}\beta e^{-\beta(\gamma_{j-1}-t_i)}
    =:S_j(\beta) e^{-\beta(t-\gamma_{j-1})}.
    \]
    For a scalar parameter $\eta$ entering only via $\kappa_j(t;\eta)$ on $(\gamma_{j-1},\gamma_j]$, define the information functional and its expectation on $[a,b]\subseteq(\gamma_{j-1},\gamma_j]$ by
    \[
    \mathcal J_\eta([a,b])=\int_a^b \frac{\{\partial_\eta \lambda(t)\}^2}{\lambda(t)} dt,
    \qquad
    \mathcal I_\eta([a,b])=\E[\mathcal J_\eta([a,b])].
    \]
    Since $\partial_\eta\lambda(t)=\big(\partial_\eta\kappa_j(t;\eta)\big)Z(t)$, two mechanisms concentrate information near $\gamma_{j-1}$. First, immediately after $\gamma_{j-1}$ the quantity $Z(t)=B_j(t)+W_j(t)$ is dominated by $B_j(t)$, which decays at the kernel rate; hence
    \[
    Z(t)\ \ge\ B_j(t)=S_j(\beta) e^{-\beta (t-\gamma_{j-1})},\qquad t\downarrow \gamma_{j-1}.
    \]
    With $\kappa_{\max}:=\sup_{(\gamma_{j-1},\gamma_j]}\kappa_j(t;\eta)$ and for a step-height parameter ($\partial_\eta\kappa_j\equiv 1$) we obtain the explicit lower bound
    \begin{align}
    \label{eq:info-near}
    \mathcal J_\eta([\gamma_{j-1},\gamma_{j-1}+\delta])
    \ \ge&\
    \int_0^\delta 
    \frac{S_j(\beta)^2 e^{-2\beta u}}
    {\lambda_0+\kappa_{\max} S_j(\beta) e^{-\beta u}} du\\
    =&\frac{1}{\beta \kappa_{\max}^2} \left[
    \kappa_{\max} S_j(\beta)(1-e^{-\beta\delta})
    -\lambda_0\log \frac{\lambda_0+\kappa_{\max} S_j(\beta)}{\lambda_0+\kappa_{\max} S_j(\beta)e^{-\beta\delta}}
    \right] .
    \end{align}
    The assumption $\partial_\eta\kappa_j\equiv 1$ is made only to obtain the closed-form bound \eqref{eq:info-near} for a canonical step-height (level) parameterization; for general parametrizations the same localization bound holds up to the multiplicative factor $\inf_{u\in[0,\delta]}(\partial_\eta\kappa_j(\gamma_{j-1}+u;\eta))^2$.

     While the explicit bound \eqref{eq:info-near} is exact, its algebraic complexity obscures the dependence on the carry-over magnitude. By inspecting the integrand as $\delta \downarrow 0$, we recover the instantaneous information density at the boundary. The integrand is continuous in $u$ and hence for small $\delta$, 
    \begin{equation}
    \label{eq:info-short-delta}
    \mathcal J_\eta([\gamma_{j-1},\gamma_{j-1}+\delta])
    \ \ge\
    \delta\,\frac{S_j(\beta)^2}{\lambda_0+\kappa_{\max} S_j(\beta)}\ +\ O(\delta^2).
    \end{equation}
    This expansion isolates the effective signal-to-noise ratio immediately following a change: detection power scales quadratically with the carry-over $S_j(\beta)$ when the background $\lambda_0$ dominates, but only linearly when the carry-over itself dominates the intensity. Note that if $\partial_\eta\kappa_j(t;\eta)$  places weight near $\gamma_{j-1}$ (e.g.\ a smoothed step or smoothed change time), then the large carry-over term in $Z(t)$ further concentrates the score in the boundary layer. For ramp slopes with $\partial_\eta\kappa_j\propto (t-\gamma_{j-1})$, this vanishing at the boundary offsets the effect and shifts the peak to lag $O(1/\beta)$ while remaining kernel-localised.
    
    \subsection{Post-change mean-field surrogate information}
    \label{sec:mf-info}
    
    For design, calibration, and priors it is helpful to have a simple deterministic surrogate that captures the rate at which information grows after a change. The mean-field surrogate derived here plays that role, yielding a closed form for level changes and an explicit saturation bound for change-time information. The results in this section assume $\kappa<1$. Supercritical regimes with unknown change times are discussed in  Subsection~\ref{subsec:explosive}. In this subsection, $\vartheta$ denotes a generic scalar parameter (e.g.\ $\kappa_2$ or $t^\ast$), i.e. a coordinate or component of the full parameter vector~$\theta$.

    Unlike $\mathcal I_\eta([a,b])=\E[\mathcal J_\eta([a,b])]$, which generally lacks a closed form, we now introduce a deterministic mean-field surrogate $\mathcal I_\vartheta^{\mathrm{mf}}$ obtained by replacing $Z$ with $M$ and $\lambda$ with $\bar\lambda$. In particular,
   \[
\mathcal I_\vartheta^{\mathrm{mf}}([0,\Delta])
:=\int_0^\Delta \frac{\big(S_\vartheta^{\mathrm{mf}}(u)\big)^2}{\bar\lambda(u)}\,du,
\]
    where $S_\vartheta^{\mathrm{mf}}(u)$ denotes the mean-field surrogate of the path-wise score factor $\partial_\vartheta\lambda(u)$ obtained by replacing $Z$ with $M$ and $\lambda$ with $\bar\lambda$. To keep the mean-field information tractable, we specialize to the canonical single-step model, i.e. $\kappa(t)$ is constant within each regime and jumps at $t^\ast$. This is the natural case for change-point inference, and it also serves as a local approximation when $\kappa(t)$ varies slowly relative to the relaxation time. Consider a single step at $t^\ast$,
    \[
    \kappa(t)=\kappa_1 \1_{\{t<t^\ast\}}+\kappa_2 \1_{\{t\ge t^\ast\}},\qquad
    0\le \kappa_1,\kappa_2<1,
    \]
    with exponential kernel. On $u:=t-t^\ast\ge 0$, the mean-field dynamics are
    \begin{equation}
    \label{eq:mf-ode-post}
    M'(u)=\beta\lambda_0+\beta(\kappa_2-1)M(u),\qquad
    \bar\lambda(u)=\lambda_0+\kappa_2 M(u).
    \end{equation}
    Let
    \[
    \varrho:=\beta(1-\kappa_2),\qquad
    \Lambda_i:=\frac{\lambda_0}{1-\kappa_i}\ (i=1,2),\qquad
    A:=\Lambda_1-\Lambda_2.
    \]
    Solving \eqref{eq:mf-ode-post} with $M(0)=\Lambda_1$ yields
    \begin{equation}
    \label{eq:M-post}
    M(u)=\Lambda_2+A e^{-\varrho u},\qquad
    \bar\lambda(u)=\Lambda_2+\kappa_2 A e^{-\varrho u}.
    \end{equation}
    In the mean-field surrogate for $\kappa_2$, we replace the path-wise factor $Z(u)=\partial_{\kappa_2}\lambda(u)$ by $M(u)$ and set
    \[
    S_{\kappa_2}^{\mathrm{mf}}(u):=M(u),\qquad
    \mathcal I_{\kappa_2}^{\mathrm{mf}}(\Delta)=\int_0^\Delta \frac{M(u)^2}{\bar\lambda(u)} du.
    \]
    It follows that 
    \begin{align}
    \label{eq:Ikappa2-mf}
    \mathcal I_{\kappa_2}^{\mathrm{mf}}(\Delta)
    =\int_0^\Delta \frac{M(u)^2}{\bar\lambda(u)} du
    =\int_0^\Delta \frac{(\Lambda_2+A e^{-\varrho u})^2}{\Lambda_2+\kappa_2 A e^{-\varrho u}} du.
    \end{align}
    With $y=e^{-\varrho u}$, $du=-(1/\varrho)y^{-1}dy$, one obtains the closed form
    \begin{equation}
    \label{eq:Ikappa2-mf-closed}
    \mathcal I_{\kappa_2}^{\mathrm{mf}}(\Delta)
    =\frac{1}{\varrho \kappa_2^2}\Big[
    A\kappa_2\big(1-e^{-\varrho \Delta}\big)
    +\Lambda_2 \kappa_2^2 \varrho\Delta
    + \Lambda_2(1-\kappa_2)^2 
    \log \frac{A\kappa_2 e^{-\varrho \Delta}+\Lambda_2}{A\kappa_2+\Lambda_2}
    \Big].
    \end{equation}
    
    To interpret the closed form, note that it arises by replacing the random filtered sum and intensity with their means: $Z(u)\mapsto M(u)=\E[Z(u)]$ and $\lambda(u)\mapsto\bar\lambda(u)=\E[\lambda(u)]=\lambda_0+\kappa_2 M(u)$. Let $\mathcal I_{\kappa_2}^{\text{true}}(\Delta):=\int_0^\Delta \E \left[\frac{Z(u)^2}{\lambda_0+\kappa_2 Z(u)}\right]du$ be the exact (expected) post-change Fisher information when the change time is known. Writing $f(z)=z^2/(\lambda_0+\kappa_2 z)$, we have $f''(z)=2\lambda_0^2/(\lambda_0+\kappa_2 z)^3>0$, so $f$ is strictly convex. By Jensen's inequality,
    \[
    \mathcal I_{\kappa_2}^{\text{true}}(\Delta)=\int_0^\Delta \E[f(Z(u))] du
    \ \ge\ \int_0^\Delta f(\E[Z(u)]) du
    =\mathcal I_{\kappa_2}^{\mathrm{mf}}(\Delta),
    \]
    with equality only if $\Var(Z(u))\equiv0$. A second-order delta-method expansion around $M(u)$ makes the (positive) Jensen gap explicit:
    \[
    \mathcal I_{\kappa_2}^{\text{true}}(\Delta)-\mathcal I_{\kappa_2}^{\mathrm{mf}}(\Delta)
    \ \approx\ \int_{0}^{\Delta}\frac{\lambda_0^2 \Var(Z(u))}{\big(\lambda_0+\kappa_2 M(u)\big)^3} du.
    \]
    After a down-step ($\kappa_1>\kappa_2$), the carry-over variability inflates $\Var(Z(u))$ near the boundary, so the surrogate underestimates most strongly. A larger baseline $\lambda_0$ both reduces the relative variability of $Z$ (more independent immigrant clusters) and decreases $f''(M(u))$, shrinking the gap. 
    
    This scaling behaviour highlights a critical distinction between information magnitude and temporal geometry. Increasing the background rate $\lambda_0$ scales the Fisher information roughly linearly (improving global detectability) and suppresses the relative impact of the Jensen gap, making the mean-field surrogate increasingly accurate. However, the relaxation time $\tau=[\beta(1-\kappa)]^{-1}$ is independent of $\lambda_0$ (for fixed $\beta,\kappa$). For change-time inference, the information density is concentrated in the transient of width $O(\tau)$ and saturates once $u\gg\tau$; a larger $\tau$ therefore spreads the onset contrast over a longer interval, making the profile likelihood in $t^\ast$ flatter on any fixed post-change window and strengthening the need for smoothing/regularisation when regimes are short relative to $\tau$. 
    
    More formally, as a function of the post-change window length $\Delta$, the mean-field information has the large-$\Delta$ expansion
    \begin{equation}
    \label{eq:Ikappa2-mf-asymp}
    \mathcal I_{\kappa_2}^{\mathrm{mf}}(\Delta)
    =\Lambda_2 \Delta + C(\lambda_0,\beta,\kappa_1,\kappa_2) + o(1).
    \end{equation}
    Thus, beyond a short transient, curvature for the post-change level grows essentially linearly with slope $\Lambda_2$.
    
    Having characterized post-change information for the level $\kappa_2$, we now turn to the information for the change time $t^\ast$. The likelihood score for a change point contains two components: a singular contribution from the instantaneous jump in intensity at $t^\ast$, and a regular contribution from the subsequent relaxation of the mean intensity.
    
    The singular ``jump information'' arises because the intensity creates an instantaneous contrast
    \[
    \lambda_{\mathrm{jump}}
    \;:=\;
    \bar\lambda(t^\ast+)-\bar\lambda(t^\ast-)
    =
    (\kappa_2-\kappa_1)\Lambda_1,
    \]
    when the pre-change regime is in steady state, so that $\bar\lambda(t^\ast-)=\Lambda_1$ and $\bar\lambda(t^\ast+)=\lambda_0+\kappa_2\Lambda_1$.  In the mean-field limit, this contributes a constant initial information that can be approximated by the squared signal-to-noise ratio of the jump:
    \begin{equation}
    \label{eq:I-jump}
    \mathcal I_{\mathrm{jump}}
    \approx
    \frac{\lambda_{\mathrm{jump}}^{\,2}}{\bar\lambda(t^\ast+)}
    =
    \frac{(\kappa_2-\kappa_1)^2\Lambda_1^2}{\lambda_0+\kappa_2\Lambda_1}.
    \end{equation}
    
    The relaxation component tracks how the mean intensity evolves after the jump. Writing $u=t-t^\ast$, the linearized mean-field dynamics yield
    \[
    \{\partial_{t^\ast}\bar\lambda(u)\}_{\text{mf}}
    =\kappa_2\,\partial_{t^\ast}M(u)
    =Be^{-\varrho u}\1_{\{u\ge 0\}},
    \qquad
    B:=-\beta\kappa_2(\kappa_2-\kappa_1)\Lambda_1,
    \]
    with $\varrho=\beta(1-\kappa_2)$ as before. Combining the singular and regular pieces, the total mean-field information for $t^\ast$ over a post-change window of length $\Delta$ is
    \begin{equation}
    \label{eq:Itstar-mf}
    \mathcal I_{t^\ast}^{\mathrm{mf}}(\Delta)
    = \mathcal{I}_{\text{jump}} + \int_0^\Delta \frac{B^2 e^{-2\varrho u}}{\bar\lambda(u)} du.
    \end{equation}
    If $\underline\lambda\le \bar\lambda(u)$ for all $u\ge 0$ (e.g.\ $\underline\lambda=\min(\Lambda_1,\Lambda_2)$), then
    \begin{equation}
    \label{eq:Itstar-mf-bound}
    \mathcal I_{t^\ast}^{\mathrm{mf}}(\Delta)
    \ \le\ 
    \mathcal{I}_{\text{jump}} + \frac{B^2}{\underline\lambda}\int_0^\Delta e^{-2\varrho u} du
    \ =\
    \mathcal{I}_{\text{jump}} + \frac{B^2}{2\varrho \underline\lambda}(1-e^{-2\varrho\Delta}),
    \end{equation}
    so the relaxation contribution converges rapidly and the total information saturates in $\Delta$. The change-time parameter is therefore strictly local: its information depends on the jump size $\lambda_{\mathrm{jump}}$ and the immediate relaxation path, but does not accumulate indefinitely as the post-change window grows. This saturation aligns with general asymptotic results for change points in point processes, which establish that estimation error is governed by the local jump magnitude rather than the observation horizon \cite{ibragimov2013statistical, dachian2008goodness}. However, \eqref{eq:Itstar-mf-bound} provides an explicit, tractable bound for this limit specifically governed by the Hawkes relaxation dynamics.
    
    From a practical standpoint, this locality also explains why the observed Hessian with respect to $t^\ast$ can be highly unstable on a single realisation, and why there is a systematic gap between the mean-field curvature and the precision of any path-wise estimator. As a function of $t^\ast$ the log-likelihood $\ell(t^\ast)$ is piecewise smooth with jumps whenever the proposed change point crosses an event time: small shifts in $t^\ast$ move individual events across the boundary, causing discontinuous changes in the score and producing a jagged landscape of local optima (reference Figure \ref{fig:sim_profiles}). By contrast, $\mathcal I_{t^\ast}^{\mathrm{mf}}$ is defined in terms of the curvature of the expected intensity path $\bar\lambda(t)$, as if the conditional intensity were observed continuously. In this idealized limit the standard local asymptotic normality argument identifies the Fisher information with the curvature of the log-likelihood, leading to the saturation bound \eqref{eq:Itstar-mf-bound}. For the realisation, however, we only observe discrete arrivals; the jaggedness of the path-wise likelihood effectively limits localisation to the scale of the inter-arrival times, and the empirical precision $1/\Var(\hat t^\ast)$ saturates below the mean-field limit. In this sense $\mathcal I_{t^\ast}^{\mathrm{mf}}$ is best interpreted as the information content of the macroscopic envelope. It quantifies the width and height of the posterior ridge one would obtain if the intensity were continuously observed, whereas any single-path curvature estimate probes a discretised, noisy version of this envelope.
    
   A Bayesian formulation mitigates this instability by integrating over the change time rather than relying on a local quadratic approximation of the single-path log-likelihood. The posterior for $t^\ast$ is proportional to $\exp\{\ell(t^\ast)\}\pi(t^\ast)$ and therefore averages over the jagged event-induced microstructure by spreading mass across nearby candidate change times, yielding credible sets that reflect both saturation and locality. In our framework, the mean-field information plays a calibrating role (it quantifies how local the problem is and the best-case sharpness one could hope for), while uncertainty quantification for change times is based on the full posterior rather than on a Gaussian approximation built from a single-path Hessian.
    
    \subsubsection{Fokker-Planck refinement}
    
    Fokker-Planck and master equations are well developed in the case of Hawkes processes \cite{hainaut2020fractional,kanazawa2020field}. Because of the exponential kernel, \(Z_t\) is a piecewise deterministic Markov process \cite{dion2020exponential} with
    \(dZ_t=-\beta Z_t\,dt+\beta\,dN_t\) and rate \(\lambda_t=\lambda_0+\kappa(t)Z_t\).
    Let \(p(t,z)\) be the density of \(Z_t\) on \(z\ge0\). The Fokker-Planck equation is
    \begin{equation}
    \label{eq:master-fwd}
    \partial_t p(t,z)
    = \beta\,\partial_z\!\big[z\,p(t,z)\big]
    +\big(\lambda_0+\kappa(t)\,(z-\beta)\big)\,p(t,z-\beta)
    -\big(\lambda_0+\kappa(t)\,z\big)\,p(t,z),
    \end{equation}
    with \(p(t,z)=0\) for \(z<0\) and initial \(p(0,\cdot)\) inherited from the pre change regime. Testing \eqref{eq:master-fwd} with \(1,z,z^2\) yields the closed moment ODEs
    \begin{align}
    \label{eq:moment-odes}
    M'(t)&=\beta\lambda_0-\beta(1-\kappa(t))M(t),\\
    V'(t)&=-2\beta(1-\kappa(t))V(t)+\beta^2\!\big(\lambda_0+\kappa(t)\,M(t)\big),\qquad V(t):=\Var[Z_t]. 
    \end{align}
    Inserting these into a second order delta expansion refines \(\mathcal I_\vartheta^{\mathrm{mf}}\). For any scalar parameter \(\vartheta\) entering only via \(\kappa(t;\vartheta)\) (i.e.\ through the \(\kappa\)-component of \(\theta\))
    \begin{equation}
    \label{eq:Ikappa-mf-plus}
    \mathcal I_\vartheta^{\mathrm{mf+}}([a,b])
    \;\approx\;\int_a^b \big(\partial_\vartheta\kappa(t)\big)^2
    \left\{\frac{M(t)^2}{\bar\lambda(t)}+\frac{\lambda_0^2\,V(t)}{\bar\lambda(t)^3}\right\}dt,
    \qquad \bar\lambda(t)=\lambda_0+\kappa(t)M(t).
    \end{equation}
    This adds a positive variance term that is largest immediately after a down step and decays at rate \(\beta(1-\kappa)\). It does not replace the closed form \eqref{eq:Ikappa2-mf-closed}; rather, \(\mathcal I^{\mathrm{mf}}\) remains the transparent baseline (and a lower bound), while \(\mathcal I^{\mathrm{mf+}}\) tightens short window calibration.
    
    \subsection{Discussion}
    
    Beyond characterising the boundary-layer transient induced by a change (and its exponential decay on the relaxation time scale), the mean dynamics determine how fast statistically usable curvature appears after a change and therefore guide both window-length selection and regularization. For a post-change level $\kappa_2$ with known change time $t^\ast$, the mean-field information density at lag $u:=t-t^\ast\ge0$ is
    \[
    \frac{d}{du}\,\mathcal I_{\kappa_2}^{\mathrm{mf}}(u)
    =\frac{M(u)^2}{\bar\lambda(u)},
    \qquad
    M(u)=\Lambda_2+A e^{-\varrho u},\quad
    \bar\lambda(u)=\lambda_0+\kappa_2 M(u),
    \]
    with $\Lambda_2=\lambda_0/(1-\kappa_2)$, $A=\Lambda_1-\Lambda_2$, and $\varrho=\beta(1-\kappa_2)$ as in \eqref{eq:mf-ode-post}-\eqref{eq:M-post}. As $u$ grows, this density converges to $\Lambda_2$, so the total curvature over $[0,\Delta]$ satisfies the linear  plus  transient decomposition
    \[
    \mathcal I_{\kappa_2}^{\mathrm{mf}}(\Delta)=\Lambda_2\,\Delta+O(1),
    \]
    exactly as stated in \eqref{eq:Ikappa2-mf-asymp}. Moreover, the convergence to the linear regime is exponentially fast. Writing $\phi(x):=\frac{(\Lambda_2+x)^2}{\Lambda_2+\kappa_2 x}$, one has $M(u)-\Lambda_2=A e^{-\varrho u}$ and $\bar\lambda(u)=\Lambda_2+\kappa_2(M(u)-\Lambda_2)$, so
    \begin{equation}
    \label{eq:info-density-dev}
    \bigg|\frac{M(u)^2}{\bar\lambda(u)}-\Lambda_2\bigg|
    =\big|\phi(M(u)-\Lambda_2)-\phi(0)\big|
    \le C\,|A|\,e^{-\varrho u},
    \end{equation}
    for a finite constant $C=C(\Lambda_2,\kappa_2,|A|)$ because $\phi$ is smooth on $[-|A|,|A|]$. Integrating \eqref{eq:info-density-dev} yields
    \begin{equation}
    \label{eq:info-transient-tail}
    \big|\ \mathcal I_{\kappa_2}^{\mathrm{mf}}(\Delta)-\Lambda_2\Delta\ -\ C_0\ \big|
    \le \frac{C\,|A|}{\varrho}\,e^{-\varrho\Delta},
    \end{equation}
    for some constant $C_0$ depending on $(\lambda_0,\beta,\kappa_1,\kappa_2)$. In particular, to ensure that the per-time curvature is within a tolerance $\varepsilon$ of its steady value, or, equivalently, that the transient contribution beyond $\Delta$ is at most $\varepsilon$, it suffices to take
    \[
    \Delta\ \gtrsim\ \frac{1}{\beta(1-\kappa_2)}\ \log\!\Big(\frac{C\,|A|}{\varepsilon}\Big)
    =\tau_2\ \log\!\Big(\frac{C\,|A|}{\varepsilon}\Big),
    \]
    that is, only a few multiples of the post-change relaxation time $\tau_2=1/[\beta(1-\kappa_2)]$. This makes precise the informal statement that one must observe for a few relaxation times after the boundary so that the information accumulation rate stabilises near its steady-state value $\Lambda_2$ and the residual boundary-layer correction becomes negligible (of order $e^{-\Delta/\tau_2}$).

    The same dynamics explain why extending the window eventually ceases to help for the change-time parameter. When $t^\ast$ is unknown, the corresponding mean-field integrand decays like $e^{-2\varrho u}$, and the bound \eqref{eq:Itstar-mf-bound} shows that the total information for $t^\ast$ saturates. Thus, while level information grows essentially linearly after a short boundary layer, change-time information exhibits sharply diminishing returns beyond a few relaxation times. This formalizes the dichotomy between local change point detection (where only a boundary layer of width $O(\tau_2)$ matters) and global level estimation (where effective information scales with the usable post-change horizon). 
    

    Finally, our mean field analysis provides concrete guidance on the observation window $\Delta$ required after a suspected change time $t^\ast$. Since information for the change time saturates once the intensity relaxes to its new steady state, extending the window beyond the transient regime yields diminishing returns for localising $t^\ast$. Specifically, with the post-change relaxation time defined as $\tau_2 = [\beta(1-\kappa_2)]^{-1}$, the mean intensity completes approximately $95\%$ of its adjustment by $3\tau_2$ and $99\%$ by $5\tau_2$.  Therefore, to maximise detection efficiency without collecting redundant steady-state data, we recommend a dynamic window of length $\Delta \approx 3\tau_2$ to $5\tau_2$. Collecting data beyond this horizon improves the precision of the level estimate $\widehat\kappa_2$ linearly, but does not materially improve the precision of the change point estimate $\widehat t^\ast$. For variable $\kappa$, given \eqref{eq:odesol}, one can expect a similar completion of the adjustment for the time window $\Delta$ when $\beta\int_{t^\ast}^{\Delta+t^\ast}(1-\kappa(s))\,ds$ is between 3 and 5.

    \section{Identifiability and short-regime limits}
    \label{sec:ident}
    
    Information localization and the surrogate growth rates together imply that short regimes are a fundamentally difficult inference problem: the total curvature available for parameters that act only on a short segment is bounded, regardless of the ambient horizon. This section formalizes that statement and explains the high posterior variance and estimator dispersion seen for short regimes in practice.
    
   Consider the variable susceptibility specification, i.e. Equation~\ref{eq:Hawkes-time-of-use}. Fix a segment $J=(\gamma,\gamma+w]$ whose effective interior length $w>0$ does not grow as the overall horizon $T\to\infty$. Let $\eta$ be a smooth parameter (e.g.\ a productivity level or ramp slope, distinct from the change time boundaries) entering only via $\kappa$ on $J$; outside $J$ the intensity does not depend on $\eta$. Assume for $\eta$ near $\eta_{\mathrm{true}}$:
\begin{enumerate}
\item[(i)] $\lambda_0>0$, so $\lambda(t;\eta)\ge \lambda_0$ for all $t\in J$.
\item[(ii)] There exists $C<\infty$ such that
\[
\sup_{t\in J}\E_{\eta_{\mathrm{true}}}\!\left[(\partial_\eta \lambda(t;\eta_{\mathrm{true}}))^2\right]\le C.
\]
See \cite{clinet2017statistical} for justification of this assumption. Note that change time parameters fail this type of regularity; their information is instead limited by the saturation bounds in Section~\ref{sec:mf-info}.
\end{enumerate}

    \begin{proposition}[Bounded information $\Rightarrow$ no unbiased consistency]
\label{prop:bounded-info}
Under \emph{(i)}-\emph{(ii)},
\[
\mathcal I_\eta(J)
=\E_{\eta_{\mathrm{true}}}\left[\int_{\gamma}^{\gamma+w}\frac{\{\partial_\eta \lambda(t;\eta_{\mathrm{true}})\}^2}{\lambda(t;\eta_{\mathrm{true}})}\,dt\right]
\ \le\ \frac{C}{\lambda_0}\,w
\ <\ \infty.
\]
Hence any unbiased estimator of $\eta$ has variance bounded below by a positive constant independent of $T$ and cannot be consistent as $T\to\infty$.
\end{proposition}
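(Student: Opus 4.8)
The plan is to chain an elementary intensity bound with Tonelli's theorem to control $\mathcal I_\eta(J)$, then to observe that the Fisher information carried by the entire record collapses to $\mathcal I_\eta(J)$ because $\eta$ acts only on $J$, and finally to run the Cram\'er--Rao argument to get the uniform variance lower bound.

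First I would establish the displayed inequality. The integrand $\{\partial_\eta\lambda(t;\eta_{\mathrm{true}})\}^2/\lambda(t;\eta_{\mathrm{true}})$ is nonnegative and $\mathcal H_t$-predictable, so Tonelli's theorem lets me exchange $\E_{\eta_{\mathrm{true}}}$ with $\int_\gamma^{\gamma+w}$. By (i), $\lambda(t;\eta_{\mathrm{true}})\ge\lambda_0$ throughout $J$, so pointwise the integrand is at most $\{\partial_\eta\lambda(t;\eta_{\mathrm{true}})\}^2/\lambda_0$; taking expectations and using (ii) gives $\E_{\eta_{\mathrm{true}}}\!\big[\{\partial_\eta\lambda(t;\eta_{\mathrm{true}})\}^2/\lambda(t;\eta_{\mathrm{true}})\big]\le C/\lambda_0$ for every $t\in J$, and integrating this constant over an interval of length $w$ yields $\mathcal I_\eta(J)\le (C/\lambda_0)\,w<\infty$. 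This step is routine.

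Next I would identify the Fisher information of the whole observation with $\mathcal I_\eta(J)$. The score on $[0,T]$ is $\partial_\eta\ell_{[0,T]}(\eta)=\int_0^T \frac{\partial_\eta\lambda(t;\eta)}{\lambda(t;\eta)}\big(dN(t)-\lambda(t;\eta)\,dt\big)$, a stochastic integral against the compensated counting martingale, whose predictable quadratic variation is $\int_0^T \{\partial_\eta\lambda(t;\eta)\}^2/\lambda(t;\eta)\,dt$; taking expectations identifies $\mathcal I_\eta$ with $\E_\eta\big[\int_0^T\{\partial_\eta\lambda\}^2/\lambda\,dt\big]$. Since $\eta$ enters only through $\kappa$ on $J$, $\partial_\eta\lambda(t;\eta)\equiv 0$ for $t\notin J$, so $\mathcal I_\eta=\mathcal I_\eta(J)$. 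Then for any unbiased $\hat\eta$ I would differentiate $\E_\eta[\hat\eta]=\eta$ under the integral, use $\E_\eta[\partial_\eta\ell]=0$ to get $\mathrm{Cov}_\eta(\hat\eta,\partial_\eta\ell)=1$, and apply Cauchy--Schwarz to obtain $1\le \Var_\eta(\hat\eta)\,\mathcal I_\eta$; combined with the first step, $\Var_{\eta_{\mathrm{true}}}(\hat\eta)\ge 1/\mathcal I_\eta(J)\ge \lambda_0/(Cw)>0$, a bound independent of $T$. Unbiasedness makes the mean-square error equal to this variance, so $\hat\eta$ is not mean-square consistent as $T\to\infty$: data collected outside $J$ carry no information about $\eta$.

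The main obstacle is not the displayed inequality --- that is just Tonelli together with $\lambda\ge\lambda_0$ --- but the localisation step: rigorously justifying that the sample-path Fisher information is finite and concentrated on $J$. This requires differentiating under the integral sign in \eqref{eqn:liklihoodAB}, the stochastic-integral (martingale) representation of the score with its predictable quadratic variation, and the interchange of differentiation and expectation in the unbiasedness identity. All of these are standard for point-process likelihoods but rely on the moment control in (ii) and compactness of $\Theta$, which is exactly the role played by the cited regularity results \cite{clinet2017statistical}.
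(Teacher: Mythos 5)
Your proposal is correct and follows essentially the same route as the paper: bound $\lambda\ge\lambda_0$, use assumption (ii) to get $\mathcal I_\eta(J)\le (C/\lambda_0)w$, and then invoke the Cram\'er--Rao bound to conclude that the variance of any unbiased estimator is bounded below by a constant independent of $T$. The extra material you supply --- the martingale/quadratic-variation representation of the score and the explicit localisation of the full-record information to $J$ --- is a more detailed justification of steps the paper leaves implicit (it simply defines $\mathcal I_\eta(J)$ on the segment and cites \cite{clinet2017statistical} for regularity), not a different argument.
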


\begin{proof}
By (i), $\lambda(t;\eta_{\mathrm{true}})\ge \lambda_0$ for all $t\in J$, hence
\[
\mathcal I_\eta(J)
=\int_{\gamma}^{\gamma+w}\E_{\eta_{\mathrm{true}}}\!\left[\frac{\{\partial_\eta \lambda(t;\eta_{\mathrm{true}})\}^2}{\lambda(t;\eta_{\mathrm{true}})}\right]dt
\ \le\ \frac{1}{\lambda_0}\int_{\gamma}^{\gamma+w}\E_{\eta_{\mathrm{true}}}\!\left[\{\partial_\eta \lambda(t;\eta_{\mathrm{true}})\}^2\right]dt.
\]
By (ii), the integrand is uniformly bounded by $C$, so
\[
\mathcal I_\eta(J)\le \frac{1}{\lambda_0}\int_{\gamma}^{\gamma+w} C\,dt
=\frac{C}{\lambda_0}\,w
<\infty.
\]
The Cram\'er--Rao bound implies $\Var(\widehat\eta)\ge 1/\mathcal I_\eta(J)$ for any unbiased estimator. Since $\mathcal I_\eta(J)$ is bounded above by a constant independent of $T$, the lower bound $1/\mathcal I_\eta(J)$ is bounded away from $0$, so no unbiased estimator can be consistent as $T\to\infty$.
\end{proof}
    
    Practically, this means that estimation risk on short regimes saturates: asymptotic variance cannot be driven below a positive constant by increasing $T$. To improve accuracy one must either enlarge $w$ (often impossible) or inject information via structure (e.g., smoothness/shape constraints on $\kappa$, informative priors, or pooling across regimes). change point detection can still be reliable when the effect size is large, but precise estimation of fine-grained features (e.g., a short ramp slope or change time) is fundamentally limited. Our specification (shared background and kernel parameters across all regimes) further reduces incidental-parameter noise: $\lambda_0$ and $\beta$ are estimated from the full record, leaving the segment-specific curvature to concentrate on the $\kappa$-parameters that truly change.

    \begin{remark}[Variable infectivity analogue]
    Under the variable infectivity specification \eqref{eq:Hawkes-fixed-parent}, a parameter supported on $J=(\gamma,\gamma+w]$ affects $\lambda(t)$ for $t>\gamma+w$ through the kernel tail, but its contribution is kernel-weighted and therefore decays and is integrable. A proof follows by bounding $|\partial_\eta\lambda(t)| \le \sup_{u\in J}|\partial_\eta\kappa(u)|\int_J g(t-u)\,dN(u)$ and using $\int_0^\infty g(s)^2\,ds<\infty$ (in particular for exponential $g$) to show the expected information remains $O(w)$.
    \end{remark}
    
    \section{Stability and forgetting}
    \label{sec:stability}
    
    For change point problems we rely on two stability properties: past influence decays at a rate controlled by the relaxation time, justifying the boundary-layer view of information; and solutions remain well-behaved even with intermittent supercritical bursts, provided there is sufficient sub-criticality on average. The results below formalize these points for exponential kernels and support the prior design and calibration choices used in estimation.
    
    Assume an exponential kernel. Suppose there exist $L>0$ and $\delta\in(0,1)$ such that
    \begin{equation}
    \label{eq:avg-gap}
    \int_t^{t+L}\big(1-\kappa(u)\big) du\ \ge\ \delta L\qquad\text{for all }t\ge 0.
    \end{equation}
    Solving \eqref{eq:M-ode} by an integrating factor gives
    \[
    M(t)=e^{-\beta t+\beta\int_0^t \kappa(u) du}\left(
    M(0)+\beta\lambda_0\int_0^t e^{\beta s-\beta\int_0^s \kappa(u) du} ds\right).
    \]
    Set
    \[
    \phi(t):=e^{-\beta\int_0^t (1-\kappa(u)) du}.
    \]
    The average-gap condition \eqref{eq:avg-gap} implies
    \[
    \phi(t+L)=\phi(t) e^{-\beta\int_t^{t+L}(1-\kappa(u)) du}\ \le\ e^{-\beta\delta L} \phi(t)\qquad\forall t\ge 0.
    \]
    Iterating this inequality and writing $n=\lfloor t/L\rfloor$ gives
    \[
    \phi(t)=\phi(nL) e^{-\beta\int_{nL}^t(1-\kappa(u)) du}
    \ \le\ \Big(\sup_{r\in[0,L]}\phi(r)\Big) e^{-\beta\delta nL}
    \ \le\ C e^{-\beta\delta t},
    \quad
    C:=\Big(\sup_{r\in[0,L]}\phi(r)\Big) e^{\beta\delta L}.
    \]
    Therefore
    \[
    e^{-\beta t+\beta\int_0^t \kappa(u) du}=\phi(t)\ \le\ C e^{-\beta\delta t}.
    \]
    Substituting this bound into the integrating-factor representation of $M$ shows that the homogeneous part decays at rate $\beta\delta$, and hence $M(t)$ and $\bar\lambda(t)=\lambda_0+\kappa(t)M(t)$ forget initial conditions exponentially fast.
    
    Alternatively, stability can be viewed through the lens of a finite population with saturation as proposed in \cite{rizoiu2018sir}. In particular, let
    \[
    \lambda(t,m\mid\mathcal H_t)=\lambda_0 
    +\kappa(t)\left(1-\frac{N_t}{N_{\mathrm{pop}}}\right)\sum_{t_i<t} g(t-t_i),
    \]
    with $N_{\mathrm{pop}}\in(0,\infty)$. In this case, the mean quantities (for an exponential kernel) satisfy
    \[
    \bar N'(t)=\bar\lambda(t),\quad
    M'(t)=\beta\bar\lambda(t)-\beta M(t),\quad
    \bar\lambda(t)=\lambda_0+\kappa(t)\Bigl(1-\frac{\bar N(t)}{N_{\mathrm{pop}}}\Bigr)M(t).
    \]
    The closed system for $(\bar N,M)$ above uses the standard mean-field factorization 
    $\mathbb E[(1-N_t/N_{\mathrm{pop}})Z(t)]\approx (1-\bar N(t)/N_{\mathrm{pop}})M(t)$ and serves only as a calibration device. 
    For bounded $\kappa(t)$ and an exponential kernel, the right-hand side is smooth with at most linear growth, so the ordinary differential equation yields a unique global solution. Practically, we note that estimating $N_{\mathrm{pop}}$ is often non-trivial in applications of interest.

    \subsection{Explosive regimes and boundary amplification}
    \label{subsec:explosive}
    
    Supercritical (equivalently, explosive) regimes with \(\kappa(t)\ge 1\) are physically plausible in contagious-disease settings: transient periods with \(R_t>1\) can occur due to behavioural shocks, holidays, or delayed interventions \cite{you2020estimation}. Mathematically, such regimes are also compatible with finite-horizon modelling, and even with global stability provided sufficiently subcritical behaviour holds on average (e.g. the average-gap condition \eqref{eq:avg-gap} across blocks). However, for change point detection these regimes behave qualitatively differently from the subcritical case \(\kappa<1\): the boundary layer no longer ``forgets'' the past, but instead amplifies it, creating a fundamental identifiability obstruction when the change point time is unknown.
    
    To see this, assume an exponential kernel and consider a single supercritical segment on \((t^\ast,t^\ast+w]\) with
    \(\kappa(t)\equiv\kappa_+>1\) for \(t\in(t^\ast,t^\ast+w]\).
    Writing \(u:=t-t^\ast\ge 0\), the mean-field ODE \eqref{eq:M-ode} becomes
    \[
    M'(u)=\beta\lambda_0+\beta(\kappa_+-1)M(u),\qquad \bar\lambda(u)=\lambda_0+\kappa_+ M(u),
    \]
    whose solution is the supercritical analogue of \eqref{eq:M-const-solution}:
    \begin{equation}
    \label{eq:M-supercritical}
    M(u)=\Big(M(0^+)+\frac{\lambda_0}{\kappa_+-1}\Big)e^{\rho u}-\frac{\lambda_0}{\kappa_+-1},
    \qquad \rho:=\beta(\kappa_+-1)>0.
    \end{equation}
    Two features matter for inference. Of primary importance, in the subcritical case \(\kappa<1\), the first term in \eqref{eq:odesol} decays like
    \(\exp\{-\beta\int(1-\kappa)\}\), yielding exponential forgetting and the boundary-layer picture. For \(\kappa_+\ge 1\), the same integrating factor instead scales like \(\exp\{+\beta\int(\kappa_+-1)\}\), i.e.\ the influence of the initial condition \(M(0^+)=M(t^{\ast+})\) is multiplied by \(e^{\rho u}\) in \eqref{eq:M-supercritical}. Equivalently, excursions inherited
    from the pre-change history are not damped but rather exponentially scaled. Thus, in an explosive segment, the ``boundary term'' is not local: it propagates forward and dominates the entire segment.
    
    This leads to non identifiability. At first glance, an exponential regime might seem easy to fit because it produces many events later in time. The difficulty is that those later events primarily identify the product
    \(\big(M(0^+)+\lambda_0/(\kappa_+-1)\big)e^{-\rho t^\ast}\) appearing in \eqref{eq:M-supercritical}, not \(t^\ast\)
    itself. Indeed, for any \(\delta>0\),
    \[
    (t^\ast,\;M(0^+))\ \mapsto\ (t^\ast+\delta,\; \tilde M(0^+):=e^{\rho\delta}M(0^+))
    \]
    leaves the leading exponential factor \(\tilde M(0^+)e^{\rho(t-(t^\ast+\delta))}=M(0^+)e^{\rho(t-t^\ast)}\) essentially unchanged (up to the additive \(\lambda_0/(\kappa_+-1)\) term). This in turn creates a ridge geometry: shifting the onset later can be offset by inflating the inherited boundary state. Compounding this, early exponential growth is slow: expanding \eqref{eq:M-supercritical} at small \(u\) gives \(M(u)=M(0^+)+\beta\lambda_0\,u+O(u^2)\), so near the onset the path is almost linear and event counts are sparse. Consequently, the data carry little information about where the regime starts, precisely where
    \(t^\ast\) would need to be identified.
    
    It follows that likelihood-based estimation of a supercritical segment can be workable when the onset time \(t^\ast\) is fixed
    (or externally known), because \(\rho=\beta(\kappa_+-1)\) can then be learned from the macroscopic growth
    once \(u\) is not too small. In contrast, when \(t^\ast\) is unknown, boundary amplification destroys the
    localization mechanism discussed in Section~\ref{sec:info}: the influence of pre-change history does not
    decay into a short boundary layer, and the likelihood develops near-invariances that make \(t^\ast\) weakly
    identified even with large post-change counts. Understanding the resulting tradeoffs (the 
    ridges coupling \(\beta\), \(\lambda_0\), and \(\kappa_+\), as well as their interaction with the unknown boundary state \(M(0^+)\)) is an important direction for future work.
    
    Therefore we focus on subcritical segments \(\kappa<1\) for change point
    inference on real data with unknown change times, where forgetting holds and the boundary-correct likelihood
    remains well-conditioned.

  \section{Estimation, regularization, and change point detection}
\label{sec:estimation}

Section ~\ref{sec:mean-dynamics} demonstrates that post-change level parameters (e.g.\ $\kappa_2$) acquire curvature essentially linearly once beyond a short boundary layer, whereas the change time parameter $t^\ast$ is intrinsically local and exhibits rapid saturation of information. In practice this creates a non-standard optimisation geometry: the profile likelihood in $t^\ast$ is not only weakly curved, but also not smooth on a single realisation because moving $t^\ast$ across an event time reassigns that event to a different regime. Consequently, mode-based estimators for $t^\ast$ such as the maximum likelihood estimate (MLE) or maximum a posteriori (MAP) estimate tend to lock onto microscopic spikes rather than the macroscopic envelope described by mean-field information. This motivates an estimation strategy that (i) uses stable, mode-based updates for smooth parameters, and (ii) uses an averaging (smoothing) operator for change points. 

Before going further, we describe MLE estimation in our setting generally. For segment-wise evaluation we condition on the full pre-segment history through the carry-over term. Fix an observation window $[0,T]$ and event times $\{t_i\}_{i=1}^n$. For any candidate partition
$0=\gamma_0<\gamma_1<\cdots<\gamma_m<\gamma_{m+1}=T$,
write the segment index of time $t$ as $j(t)$ such that $t\in(\gamma_{j-1},\gamma_j]$.
For exponential kernels $g(s)=\beta e^{-\beta s}\mathbbm{1}_{\{s>0\}}$ it is convenient to carry forward the
one-dimensional boundary state
\[
S_j(\beta)\ :=\ \sum_{t_i<\gamma_{j-1}} \beta e^{-\beta(\gamma_{j-1}-t_i)},
\qquad
B_j(t)\ =\ S_j(\beta)\,e^{-\beta(t-\gamma_{j-1})}, \quad t>\gamma_{j-1},
\]
so that the filtered sum decomposes as $Z(t)=B_j(t)+W_j(t)$, where $W_j$ depends only on within-segment events.
Conditioning each segment on $S_j(\beta)$ yields the segment-conditional log-likelihood
\begin{equation}
\label{eq:condlik_new}
\ell_j(\theta;\gamma_{j-1:j})
=
\sum_{t_i\in(\gamma_{j-1},\gamma_j]}
\log \lambda(t_i;\theta)
-
\int_{\gamma_{j-1}}^{\gamma_j}\lambda(t;\theta)\,dt,
\end{equation}
with $\ell(\theta;\gamma_{1:m})=\sum_{j=1}^{m+1}\ell_j(\theta;\gamma_{j-1:j})$.
Operationally, this ``conditions on the full history'' while requiring only the scalar carry-over state $S_j(\beta)$.  The carry-over likelihood centres each segment 's score and yields a boundary-correct, modular likelihood.

In practice, parametric estimation of the conditional intensity parameters of Hawkes processes via MLE is generally computationally expensive \cite{kresin2023parametric}, and though such estimates are consistent \cite{ogata1978estimators}, they can be unstable in finite observation windows \cite{reinhart2018review,veen2008estimation}. Further, in the case of our model specification, the mean field analysis of Section \ref{sec:mf-info} demonstrates that the curvature in a neighbourhood of the true parameter vector  necessary for MLE convergence is absent.   We therefore suggest fitting by alternating (A) a MAP update of the continuous parameters at a  fixed change point configuration,
and (B) a center-of-mass (CoM) update of the change point(s) under the current parameters.
This is a coordinate-descent scheme in which the usual hard assignment of latent change points is replaced by a
smoothed summary (posterior mean on a grid), precisely to neutralise the jagged likelihood geometry in the $t^\ast$ profile. This approach is further justified in that there is often an empirical gap between MAP and CoM for $t^\ast$ which itself is a useful diagnostic of limited identifiability and of the discretisation gap between the continuous-time surrogate curvature $\mathcal I_{t^\ast}^{\mathrm{mf}}$ and the curvature recoverable from a single realisation.

We now describe the implementation of this scheme. Let $\theta$ denote the continuous parameters (e.g.\ $\lambda_0,\beta$ and segment productivities), and let $t^\ast = (\gamma_1, \dots, \gamma_{m})$ denote the vector of change point locations to be estimated.
Fix priors $\pi(\theta)$ and $\pi(t^\ast)$ (we generically recommend weakly informative log-normal priors for $\lambda_0,\beta$
and Beta priors for $\kappa$-levels). To implement our estimation method, we first find the MAP for the smooth parameters. In particular, given the current change point estimate $t^{\ast{(r)}}$, update
\begin{equation}
\label{eq:map_step}
\theta^{(r+1)} \in \arg\max_{\theta}\ \Big\{\ell(\theta;t^{\ast{(r)}}) + \log\pi(\theta)\Big\}.
\end{equation}
This step is well supported by curvature because $\theta$ enters smoothly, and (crucially) $\lambda_0$ and $\beta$
are shared across regimes, pooling information globally. We then use CoM for the change points. Given $\theta^{(r+1)}$, evaluate the (log-)posterior over a fine grid $\mathcal G$ of candidate change points:
\[
\log \widetilde \pi(t^\ast\mid\text{data},\theta^{(r+1)})\ \propto\ \ell(\theta^{(r+1)};t^\ast) + \log\pi(t^\ast),
\qquad t^\ast\in\mathcal G.
\]
Define weights $w(t^\ast)\propto \exp\{\ell(\theta^{(r+1)};t^\ast)+\log\pi(t^\ast)\}$ and update the change point by its
posterior centre of mass:
\begin{equation}
\label{eq:com_step}
t^{\ast{(r+1)}} \ :=\ \sum_{t^\ast\in\mathcal G} t^\ast\, w(t^\ast)
\quad\text{(component-wise if $t^\ast$ is vector-valued)}.
\end{equation}
Because $\ell(\cdot;t^\ast)$ is piecewise-smooth with kinks and jumps as $t^\ast$ crosses event times,
\eqref{eq:com_step} is essential: it averages over the local spikes and returns the stable centre of the macroscopic
envelope predicted by the mean-field information calculations in Section~\ref{sec:mf-info}, supplying the missing regularity that mode-based updates lack.

\subsection{Model selection}

The above estimator suggests the following approximation to model evidence as long as there is enough data in each segment to warrant a BIC approximation to model evidence on the segment. In particular, for each model $M$,
\begin{align}\label{eq:int_evidence} p(\mathrm{data}|M)=&\int \int p(\mathrm{data}|\theta,t^\ast,M)\pi(\theta|M,t^\ast)\,d\theta \,\pi(t^\ast|M)\,dt^\ast\,\pi(M)\\
\approx &c_{\mathrm{max}}\int \frac1{c_\mathrm{max}}\exp\Big\{
\underbrace{\ell(\widehat\theta;t^\ast)+\log\pi(\widehat\theta|t^\ast,M)}_{\text{profile log-posterior}}
\;-\;\underbrace{\tfrac{p_M}{2}\log n}_{\text{penalty}}
\Big\}\,
\pi(t^\ast\mid M)\,dt^\ast\,\pi(M),
\end{align}
where we are using Laplace's method to approximate the inside integral, $c_{\mathrm {max}}$ is of the order of the exponential to make numerical computation of the integral feasible, $t^\ast$ is a vector of change points, $p_M$ is the number of other parameters of the model $M$, and $\widehat \theta$ maximises the profile log-posterior given $t^\ast$; i.e., given by MAP on each segment as above. 


In practice we approximate the integral in \eqref{eq:int_evidence} by Monte Carlo samples of $t^\ast$ from a simple prior
(e.g.\ uniform with minimum-gap constraints), use a short inner optimisation for $\widehat\theta(t^\ast)$, and compute the
log-average stably via log-sum-exp. We place a mild structural prior on the number of change points, e.g.\ $\pi(M)\propto \mathrm{Poisson}(K;\lambda_{\mathrm{segs}})$, where $K$ is the number of segments of $M$. 

\subsection{Simulation study}
\label{sec:sim}

All simulations use exponential kernels $g(t)=\beta e^{-\beta t}\mathbbm{1}_{\{t>0\}}$.
We focus on variable susceptibility specification \eqref{eq:Hawkes-time-of-use} because it most clearly exhibits the
likelihood discontinuities that motivate CoM smoothing (and thereby the above described estimation strategy). We additionally contrast with the variable infectivity specification \eqref{eq:Hawkes-fixed-parent} in the likelihood profile experiment below. We first demonstrate the underlying pathology that makes change point detection in the variable productivity setting difficult. This is demonstrated most clearly by visualising the profile log-likelihood as a function of the change time $t^\ast$. This profile can be jagged and discontinuous on a single realisation. The discontinuities occur because shifting $t^\ast$ across an event time switches which productivity applies at that event, creating jumps in the log-intensity sum and in the compensator.

\begin{figure}[!ht]
  \centering
  \includegraphics[width=0.98\linewidth]{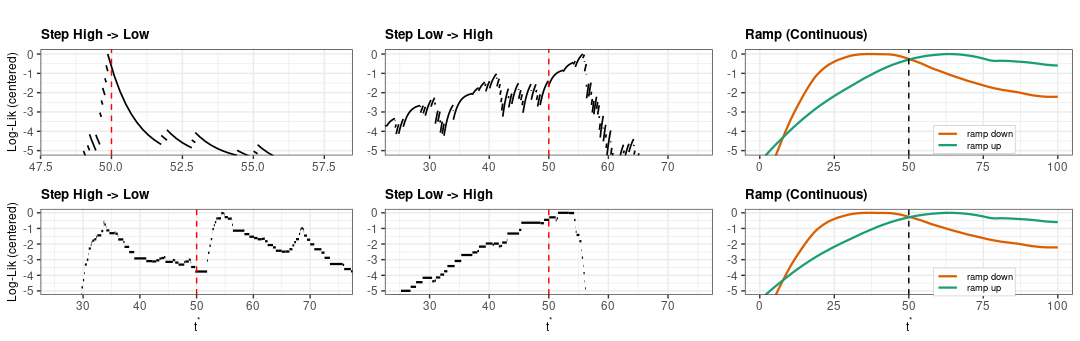}
  \caption{Change-time profile log-likelihoods (centred by subtracting the maximum) under three scenarios.
  Top row: variable susceptibility. Bottom row: variable infectivity.  Dashed vertical lines mark the true $t^\ast=50$.}
  \label{fig:sim_profiles}
\end{figure}

We fix $T=100$, $\lambda_0=1$, $\beta=1$, and a true change time $t^\ast=50$. We consider three productivity scenarios: high $\to$ low wherein $(\kappa_1,\kappa_2)=(0.75,0.25)$, low $\to$ high wherein $(\kappa_1,\kappa_2)=(0.25,0.75)$, and a continuous ramp wherein $\kappa(t)=\kappa_0$ for $t<t^\ast$ and $\kappa(t)=\kappa_0+s\,(t-t^\ast)$ for $t\ge t^\ast$,
with $(\kappa_0,s)=(0.25,0.005)$ (ramp up) and $(0.75,-0.005)$ (ramp down). Under model specification \ref{eq:Hawkes-time-of-use}, a step in $\kappa$ produces an instantaneous intensity jump and the change-time profile becomes a collection of
disconnected segments separated by event-induced jumps. Under the variable infectivity model, productivity is tied to parent times;
step changes therefore enter through the kernel-weighted offspring contributions and the resulting $t^\ast$ profile is
substantially smoother. For ramps, $\kappa(t)$ is continuous and both specifications yield smooth profiles, but still demonstrate flatness near the true value as well as persistent skewness.

We next demonstrate that simultaneous recovery of continuous parameters and the change point(s) is possible using the above proposed alternating MAP/CoM when the model is correctly specified. We simulate 500 single-change variable susceptibility Hawkes processes in the low $\to$ high regime, using $\log\lambda_0\sim \mathcal N(0,2^2)$ and $\log\beta\sim \mathcal N(0,2^2)$ (log-normal on $\lambda_0,\beta$), and $\kappa_j\sim \mathrm{Beta}(2,2)$ independently. We report MAP estimates for $(\kappa_1,\kappa_2,\lambda_0,\beta)$ and the CoM estimate for $t^\ast$ in Figure \ref{fig:sim_param_recovery}.

Finally, we demonstrate that the criterion in Equation \ref{eq:int_evidence} is effective for model selection when change point likelihoods are jagged and curvature is minimal or non-existent. We simulate  200 realisations allowing for $0$, $1$, or $2$ change points on $T=200$, where  $\lambda_0\sim \mathrm{Unif}(1,5)$ and $\beta\sim \mathrm{Unif}(1,5)$. Each candidate model ($K=0,1,2$) is fit by alternating MAP/CoM then scored. The results are summarised in Figure \ref{fig:sim_integrated_selection}.

\subsection{Information growth and likelihood shape}
\label{sec:mf-sim}

We simulated single-change Hawkes paths under the variable susceptibility specification and compared the mean-field formulas of Section~\ref{sec:mf-info}. In all experiments the kernel is exponential with $\beta=1$ and the change time is $t^\ast=50$. We consider two step configurations, an up-step $(\kappa_1,\kappa_2)=(0.25,0.75)$ and a down-step $(0.75,0.25)$, and take the post-change window length $\Delta_{\text{post}}$ in units of the post-change relaxation time $\tau_2=1/[\beta(1-\kappa_2)]$.

For the information plots in Figure~\ref{fig:mf-info-step} we vary the baseline rate over $\lambda_0\in\{1,2,4,8\}$ and simulate $R=2000$ independent paths for each $(\lambda_0,\kappa_1,\kappa_2)$ configuration. For the level parameter $\kappa_2$ we compute the empirical Fisher information on $[t^\ast,t^\ast+\Delta_{\text{post}}]$ as the variance of the score (black circles), and compare it with (i) the closed-form mean-field prediction $\mathcal I_{\kappa_2}^{\mathrm{mf}}(\Delta)$ in \eqref{eq:Ikappa2-mf-closed} (solid blue lines), and (ii) the Fokker-Planck refinement $\mathcal I_{\kappa_2}^{\mathrm{mf+}}(\Delta)$ in \eqref{eq:Ikappa-mf-plus} (red dashed lines). For the change time $t^\ast$ we use the precision $1/\Var(\hat t^\ast)$ of the path-wise MLE as a proxy for curvature (black circles) and compare it with the mean-field expression in \eqref{eq:Itstar-mf}: the magenta dashed curves show the smooth relaxation contribution only, while the purple solid curves add the singular jump information \eqref{eq:I-jump}, giving the bound \eqref{eq:Itstar-mf-bound}.

The level-information panels confirm the linear-plus-transient structure \eqref{eq:Ikappa2-mf-asymp}. After a short boundary layer of length $O(\tau_2)$, the empirical information grows essentially linearly with $\Delta_{\text{post}}$, with slope close to $\Lambda_2=\lambda_0/(1-\kappa_2)$ and nearly independent of the step direction. The mean-field curve already tracks this behaviour well, and the Fokker-Planck correction is only visible for small windows and lower baselines, where carry-over variability inflates $\Var(Z_t)$ immediately after the change. This is exactly the Jensen gap predicted after \eqref{eq:Ikappa2-mf-closed}: because $z\mapsto z^2/(\lambda_0+\kappa_2 z)$ is convex, replacing the random $Z_t$ by its mean $M_t$ underestimates information, with the largest discrepancy in the down-step case where the pre-change regime is more variable. Practically, the linear slopes in Figure~\ref{fig:mf-info-step} calibrate how many relaxation times of post-change data are needed to reach a target curvature for $\kappa_2$.

The change-time panels illustrate the locality of $t^\ast$. In all configurations the empirical precision quickly saturates in $\Delta_{\text{post}}$, and the saturation level grows roughly linearly in $\lambda_0$, as predicted by \eqref{eq:Itstar-mf}-\eqref{eq:Itstar-mf-bound}. Comparing up- and down-steps, the plateau is much higher for the down-step $(0.75\to0.25)$: it is easier to localise a drop from a busy regime into a quiet one than a jump from a quiet regime into a busy one, because the information for $t^\ast$ is dominated by a small boundary layer around the change rather than by the long post-change horizon.

While the mean-field surrogate $\mathcal I_{t^\ast}^{\mathrm{mf}}(\Delta_{\mathrm{post}})$ accurately predicts the shape of information growth and its saturation with post-change window length, Figure~\ref{fig:mf-info-step} shows a persistent gap between this theoretical curvature and the realised empirical precision $1/\Var(\hat t^\ast)$ (black circles). The mean-field quantity is the curvature of the log-likelihood one would obtain if the conditional intensity $\bar\lambda(t)$ were observed continuously; because a realisation is comprised of discrete arrival times, the path-wise log-likelihood is a jagged function of $t^\ast$ with kinks at every event time. This discreteness breaks the local asymptotic normality approximation that would otherwise equate curvature and Fisher information for change times, and effectively limits localisation accuracy to the scale of inter-arrival times. Consequently, we view $\mathcal I_{t^\ast}^{\mathrm{mf}}$ as describing the information content of the macroscopic envelope or posterior ridge, and the plateau of $1/\Var(\hat t^\ast)$ as the amount of that envelope curvature that is actually recoverable from a single discretised path.

To study the full likelihood geometry we fix $\lambda_0=1$ and simulate $R=500$ paths for each step direction. For each path we evaluate the log-likelihood $\ell(\kappa_2,t^\ast)$ on a grid and compute three estimators: the MLE (likelihood mode), a MAP estimator under a weak Gaussian prior centred at the true $(\kappa_2,t^\ast)$, and the posterior centre of mass (CoM). Figure~\ref{fig:relax-grid} displays the mean likelihood surface and one-dimensional profiles through the peak.

\begin{figure}[!h]
  \centering
  \includegraphics[width=\linewidth]{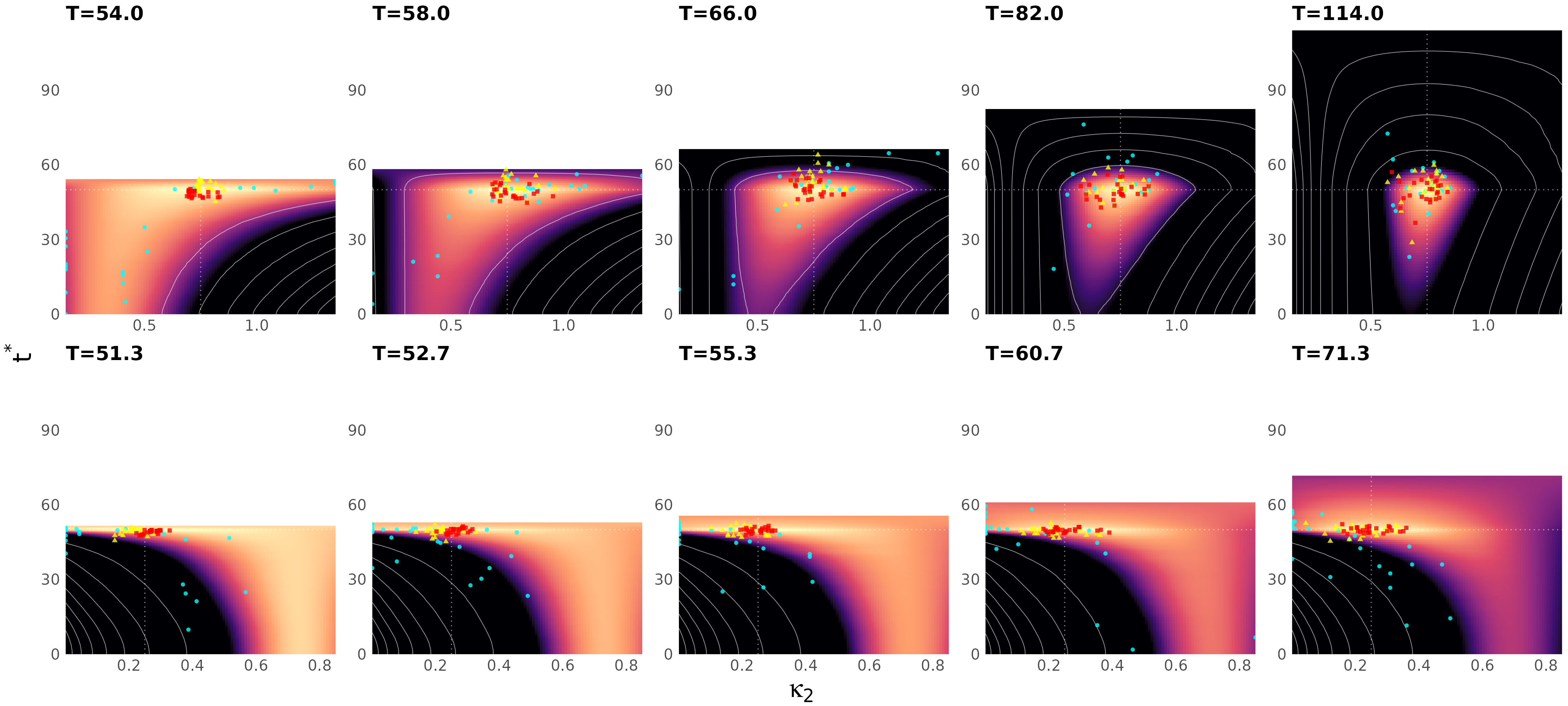}
  \caption{\textbf{Likelihood surface evolution over time.} Mean log-likelihood surfaces for post-change parameters $(\kappa_2, t^*)$ under two scenarios: Low-to-High ($\kappa: 0.25 \to 0.75$, top row) and High-to-Low ($\kappa: 0.75 \to 0.25$, bottom row). Columns correspond to increasing post-change observation windows defined by multiples of the post-change relaxation time $\tau = [\beta(1-\kappa_2)]^{-1}$, specifically $1\tau, 2\tau, 4\tau, 8\tau,$ and $16\tau$ (left to right). The surfaces illustrate the transition from the transient regime (left) to the asymptotic steady state (right). In the early transient phase, surfaces are diffuse and highly skewed, particularly in the High-to-Low scenario where the penalty for late detection is linear rather than logarithmic. As data accumulates beyond the boundary layer, the likelihood concentrates tightly into a symmetric ellipse around the true parameters (dashed crosshairs). Individual estimates for 25 realizations are overlaid: MLE (cyan circles), MAP (yellow triangles), and CoM (red squares). Note that in the transient regime, MLEs disperse widely along the ridge of the likelihood surface, while the CoM estimator remains stable and tightly clustered near the true change point.}
  \label{fig:relax-grid}
\end{figure}

In Figure~\ref{fig:relax-grid}, the likelihood surfaces are computed over the post-change point parameters $(\kappa_2,t^\ast)$ while holding the background rate, decay, and pre-change productivity fixed at their true values ($\lambda_0=1$, $\beta=1$, and $\kappa_1\in\{0.75,0.25\}$). This isolates the geometry of the transition itself. In the early transient phase (left columns), the extended right-hand tail of the surface in the down step case $(0.75\to0.25)$ reflects the fact that pushing the candidate change $t^\ast$ later extends the high-productivity regime into a genuinely quiet period. On $(t^\ast_{\mathrm{true}},t^\ast)$ the model over predicts the rate, but because the true intensity is low there are few events, so the log sum $\sum_{t_i}\log\lambda(t_i)$ changes little and the penalty is dominated by the compensator term $-\int\lambda(t)\,dt$. The mean likelihood surface illustrates the role of the compensator in the absence of data. In the down-step scenario ($0.75\to0.25$), the noticeable gradient along the $\kappa_2$ axis arises because the true post-change regime is quiet (lack of points). As $\kappa_2$ increases (moving right), the model predicts excess intensity without events to justify this higher rate. As a consequence, the compensator purely penalizes the objective, driving the likelihood down. By contrast, the steep drop in the bottom-left corner reflects the opposing penalty: assigning a low rate to the busy pre-change period.

As the post change window extends (moving left to right in Figure~\ref{fig:relax-grid}), the geometry transforms. For the level parameter, the surfaces tighten vertically, consistent with the linear growth of information $\mathcal I_{\kappa_2}^{\mathrm{mf}}$. The change-time geometry, in contrast, remains strongly skewed in the short-window limit. This skewness reflects the jagged path-wise likelihood: as $t^\ast$ moves, the likelihood has kinks whenever the proposed boundary crosses an event time. This piecewise-smooth geometry explains why the observed Hessian with respect to $t^\ast$ on a single path is so unreliable: the local second derivative is dominated by a handful of events and does not represent the saturated information level in \eqref{eq:Itstar-mf-bound}.

The direction of skewness is clearly visible in the transient columns of Figure~\ref{fig:relax-grid}: in the down-step $(0.75\to0.25)$ the surface has a long right tail. In that case, it is comparatively cheap, in log-likelihood units, to move the change time later than the truth, but expensive to move it earlier. Conversely, in the up-step $(0.25\to0.75)$ the pattern reverses and the tail points to the left, making it cheap to declare the change too early and costly to declare it too late. These asymmetries are precisely what drive the systematic early/late bias seen in the simulations.

The evolving geometry in Figure~\ref{fig:relax-grid} means that the three estimators respond very differently to this landscape. The MLE only finds the mode based on the path-wise likelihood and therefore chases local peaks along the ridge of the surface. When curvature is weak or highly anisotropic (left columns), the MLEs (cyan dots) disperse widely along the ridge and can cluster on its edges. The MAP (yellow triangles) is still a mode, but the weak Gaussian prior adds a small, isotropic curvature term, pulling estimates toward the true $(\kappa_2,t^\ast)$. The CoM (red squares) is the posterior centre of mass. It averages over the entire posterior rather than just locating its maximum. In these simulations the CoM is visually the most stable: it is less sensitive to the jagged local geometry and to the long one-sided tails than either mode-based estimator, and concentrates tightly near the true parameters. CoM and MAP coincide when the posterior becomes approximately symmetric and unimodal (as in the asymptotic right-most columns), but diverge when the posterior is skewed, precisely the regime where $t^\ast$ is hardest to identify.

 \section{Application to contagious disease spread in The Gambia}\label{sec:app}

We apply our model to analyse the incidence of invasive pneumococcal disease (IPD) using data from the Basse Health and Demographic Surveillance System (BHDSS) in the Upper River Region of The Gambia \cite{mackenzie2012monitoring}. The surveillance population included all BHDSS residents aged 2 months or older; our analysis focuses on cases among children aged two months to 14 years. In 2017, the total population was 181,740, of whom 85,279 were aged 2 months to 14 years \cite{mackenzie2021impact}. We analyse IPD cases recorded between May 2007 and November 2015, spanning the introduction of the seven-valent pneumococcal conjugate vaccine PCV7 in August 2009 and PCV13 in May 2011. PVC13  was administered in three primary doses at two, three, and four months of age. Over the 8-year study period, 348 IPD cases were confirmed among children aged 2 months to 14 years. Patients were screened by nurses and referred to clinicians following standardized criteria, and IPD cases were defined as those from whom \textit{Streptococcus pneumoniae} was isolated from a normally sterile site \cite{mackenzie2012monitoring}. Previous analyses \cite{mackenzieEffectIntroductionPneumococcal2016, mackenzie2021impact} found an 80\% reduction in IPD incidence among children under five between the pre-vaccine period and ten years post-introduction. We apply our method to detect change points in the incidence series and assess whether these data-driven estimates align with the vaccine rollout dates.
As discussed in Remark~\ref{rem:fixed-parent-mf}, the mean-field and information-geometric results in Sections~\ref{sec:mean-dynamics}--\ref{sec:mf-info} were derived under the variable susceptibility specification \eqref{eq:Hawkes-time-of-use}. We adopt this specification for the analysis, as placing the productivity term outside the kernel sum aligns with the mechanism of a vaccine campaign: it represents a population-wide reduction in susceptibility that scales the transmission potential of all currently active cases. We also fitted the variable infectivity specification \eqref{eq:Hawkes-fixed-parent} and observed no discernible difference in the estimated change points or productivity levels. This robustness is expected because the estimated kernel decay time is significantly shorter than the duration of the identified regimes; consequently, the transient differences between the instantaneous jump of the variable susceptibility model and the smooth relaxation of the variable infectivity model are negligible relative to the timescale of the analysis.

Temporal Hawkes models were fit to discrete daily case data by uniformly assigning continuous times within each day. To capture complex dynamics such as gradual adoption or waning efficacy, we explored combinations of constant (C), linear ramp (R), and exponential (E) segments in the piecewise productivity function. Continuous parameters (background rate, temporal decay, and productivity parameters) were fit using the boundary-correct segment-conditional likelihood \eqref{eq:condlik_new} and the alternating MAP/CoM scheme described in Section~\ref{sec:estimation}. Concretely, for a given candidate change point configuration we compute a MAP update of $(\lambda_0,\beta)$ and the segment productivity parameters under weakly informative priors, and then update the change point(s) using a centre-of-mass (CoM) step computed from the change point posterior evaluated on a fine grid. This averaging step is essential in these sparse data: the path-wise change-time profile is jagged (kinks at event times), so mode-based change point updates can lock onto microscopic spikes rather than the macroscopic envelope predicted by the mean-field information calculations. Model comparison was performed using the integrated evidence criterion of \eqref{eq:int_evidence}, which marginalises over change points to stabilise selection.

The data are sparse, with 348 cases reported over the seven-year period. To ensure stability, particularly during periods of low transmission, we constrained $\kappa \in (0,1)$ and used priors that penalise values near the boundaries to prevent estimates from collapsing to zero. Figure~\ref{fig:model_ranking} displays the top 25 models ranked by integrated evidence. A consistent pattern emerges across the highest-ranked specifications: the productivity remains stable in the early period and undergoes a significant decline beginning around $t=1000$. The best-performing model with change points, denoted \texttt{2\_CP\_CRC}, specifies a Constant-Ramp-Constant productivity profile. This result suggests that a model allowing for a gradual decline (ramp) is statistically preferred over instantaneous step-changes for these data.

\begin{figure}[!htbp]
    \centering
    \includegraphics[width=0.98\linewidth]{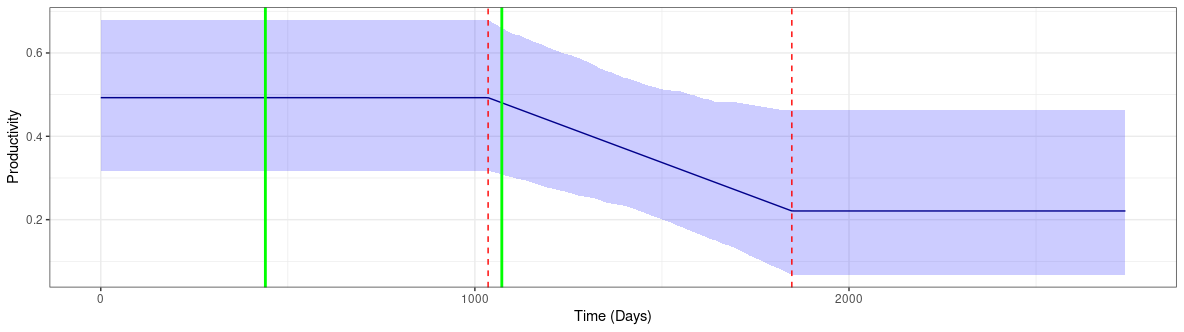}
    \caption{Posterior estimate for the best-fitting model (\texttt{2\_CP\_CRC}). The blue shaded region indicates the 95\% credible interval for $\kappa(t)$. Vertical green lines mark the vaccine rollouts (PCV7 left, PCV13 right). Red dashed lines indicate the estimated change points. The model identifies a stable pre-intervention productivity followed by a linear decline starting almost concurrently with the PCV13 introduction.}
    \label{fig:best_model}
\end{figure}

The fit of the best model is shown in Figure~\ref{fig:best_model}. The productivity $\kappa(t)$ is estimated to be approximately constant at $\kappa \approx 0.48$ during the pre-intervention and PCV7 periods. A structural break is identified at the onset of a linear decline (ramp) segment. Notably, this first change point aligns closely with the introduction of PCV13, suggesting that the second vaccine drive was the primary driver of the reduction in transmission potential. The ramp segment indicates a gradual reduction in contagion over approximately 800 days, stabilizing at a new lower level of $\kappa \approx 0.21$. This gradual decline is consistent with the time required to achieve high vaccine coverage and herd immunity effects.

The uncertainty of the detected regime shifts is characterized using the joint High Posterior Density (HPD) region shown in Figure~\ref{fig:joint_hpd}. The density contours reveal a distinct asymmetry in precision between the two events. The first change point (CP1) is tightly constrained along the horizontal axis, centered at $t \approx 1050$ and situated entirely prior to the PCV13 introduction date (dotted vertical line). This localization confirms that the onset of the decline is well-identified by the data. In contrast, the region is vertically elongated, indicating significantly higher uncertainty regarding the second change point (CP2), which spans a broad range from $t \approx 1750$ to $1950$. This shape implies that while the start of the intervention effect is sharp, the transition to the subsequent plateau is gradual and less precise in time.

\begin{figure}[!htbp]
    \centering
    \includegraphics[width=0.8\linewidth]{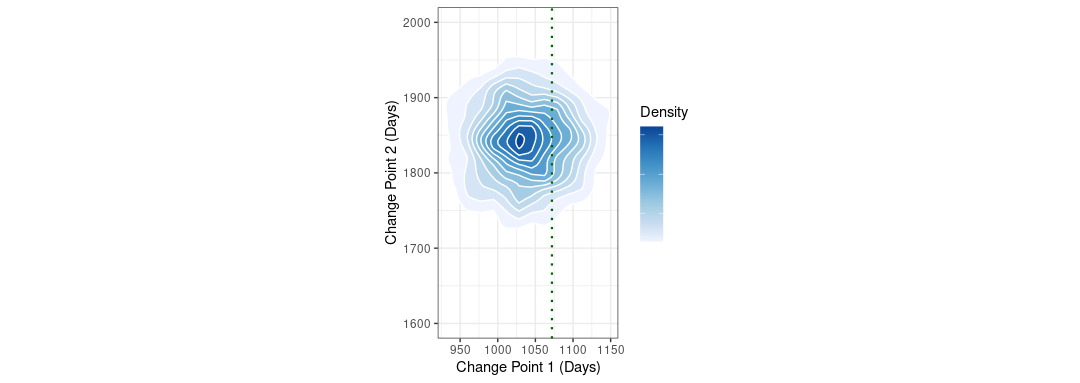}
    \caption{Joint HPD region for the first two change points. The contour plot visualizes the bivariate posterior probability, where darker shades represent regions of higher density. The vertical dotted green line marks the introduction of PCV13 ($t=1072$). The shape of the distribution highlights the contrast in temporal precision: the first change point (x-axis) is sharply localized immediately prior to the intervention, whereas the second change point (y-axis) exhibits significantly larger variance.}
    \label{fig:joint_hpd}
\end{figure} 

The application demonstrates the utility of the proposed framework: by allowing variable productivity and using a stable, evidence-based selection criterion, we recovered a biologically plausible trajectory of disease control marked by a gradual decline following the intervention.

\section{Conclusion}
    This paper develops Hawkes processes in which the endogenous amplification $\kappa(t)$ varies piecewise while the background rate and kernel parameters are shared across segments. A mean-field analysis yields explicit relaxation laws for the post-change mean, shows boundary-layer localization of information, and leads to a closed-form deterministic surrogate for Fisher information after a change: information for post-change levels grows essentially linearly once past a short transient, whereas information for the change time saturates. Sharing $\lambda_0$ and kernel parameters across all segments stabilizes curvature and mitigates confounding. This perspective yielded a novel estimation strategy and complementary model selection criteria. Simulations corroborate the surrogate information rates and information geometry, and the application to invasive pneumococcal disease in The Gambia produces interpretable change points that align with vaccine rollout, despite sparse counts and short effective regimes.
    
    Several extensions are natural. First, replacing the exponential kernel with heavy tailed or multi scale kernels (e.g., power laws) would broaden applicability and require generalized relaxation and information formulas; the mean-variance ODEs and the master equation viewpoint offer a route to such results. Second, jointly time-varying background and productivity could clarify identifiability trade-offs and allow covariate-driven structure in $\kappa(t)$, marks, or space-time ($\kappa(t,m,\mathbf{x})$). Third, online detection with stopping rules calibrated by the surrogate information could turn the framework into a real-time monitor, including handling supercritical bursts and finite population saturation. Finally, building on the work of \cite{kanazawa2020field} to develop multivariate extensions Fokker-Planck equations is of theoretical interest. 
    
    \section{Acknowledgments} We would like to acknowledge Angus Henderson for his work on developing a stochastic EM algorithm for change point detection, as well as Philip Hill and Grant Mackenzie for giving access and insight with respect to the Gambian disease data discussed in Section~\ref{sec:app}.
    \bibliographystyle{plain} 
    \bibliography{bib}
\newpage
    \appendix

    \section{Figures}

    \begin{figure}[!ht]
  \centering
  \includegraphics[width=0.9\linewidth]{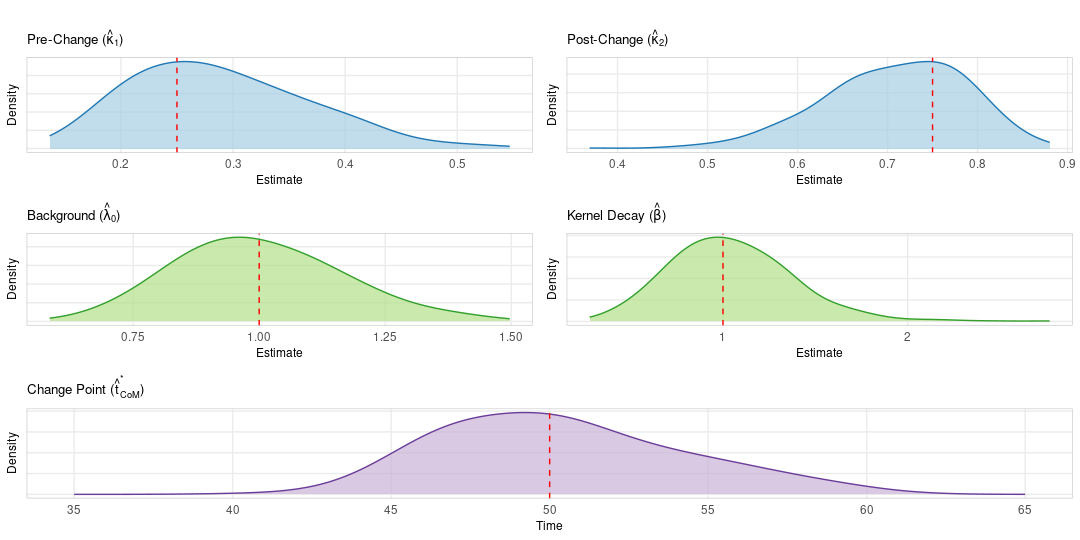}
  \caption{Parameter recovery across replicated variable susceptibility simulations using alternating MAP--CoM.
  Vertical dashed lines mark the true parameter values.}
  \label{fig:sim_param_recovery}
\end{figure}

\begin{figure}[!ht]
  \centering
  \includegraphics[width=0.3\linewidth]{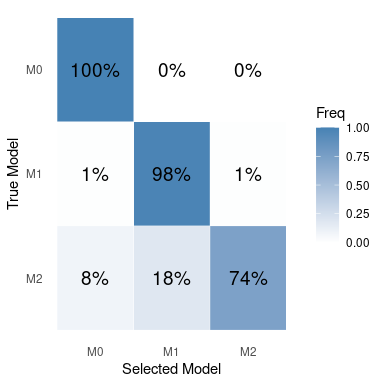}
  \caption{Model selection using integrated evidence (marginalised over change points) after alternating MAP/CoM fitting.
  Marginalisation stabilises selection by averaging over the jagged change point likelihood landscape rather than relying on
  a single maximiser.}
  \label{fig:sim_integrated_selection}
\end{figure}

\begin{figure}[!h]
  \centering
  \includegraphics[width=0.48\linewidth]{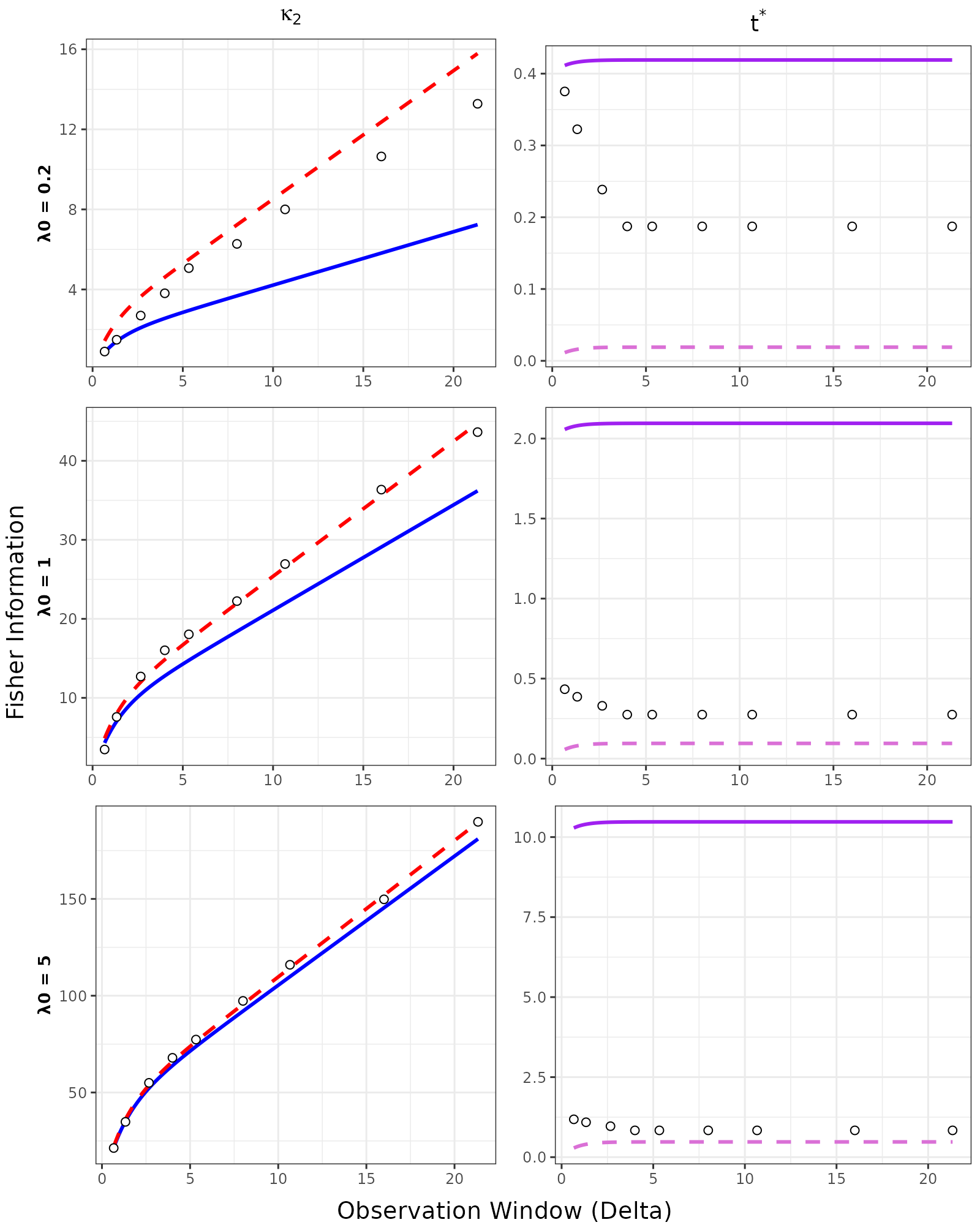}\hfill
  \includegraphics[width=0.48\linewidth]{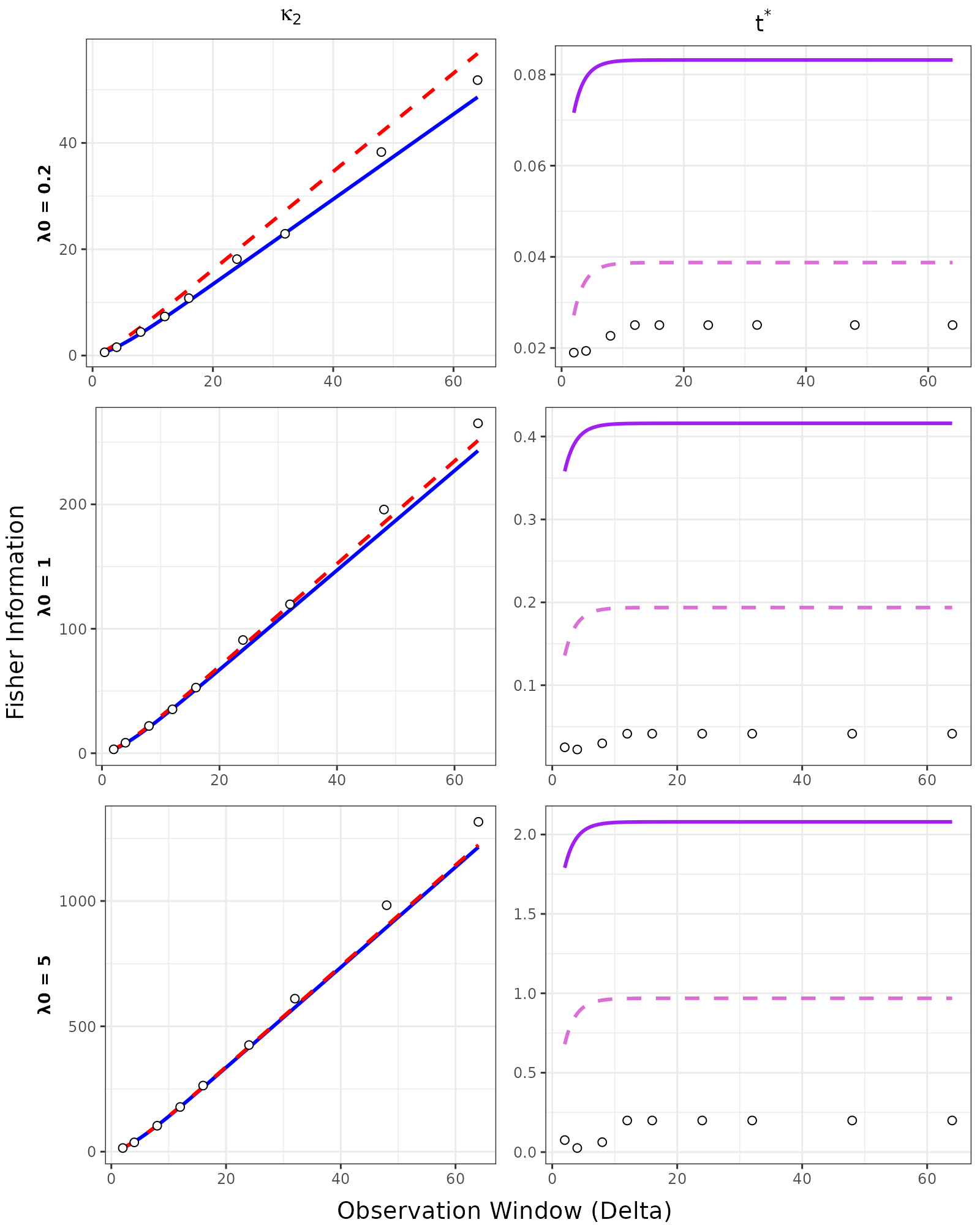}
  \caption{Information for level and change-time parameters as a function of post-change window length $\Delta_{\text{post}}$, in units of the post-change relaxation time $\tau_2$. Each panel corresponds to one baseline $\lambda_0\in\{0.2,1,5\}$ (rows). Left: down-step from $\kappa_1=0.75$ to $\kappa_2=0.25$. Right: up-step from $\kappa_1=0.25$ to $\kappa_2=0.75$. Within each panel, the left subplot shows information for the post-change level $\kappa_2$: solid blue lines are the mean-field surrogate $\mathcal I_{\kappa_2}^{\mathrm{mf}}$ from \eqref{eq:Ikappa2-mf-closed}, red dashed lines are the Fokker-Planck refinement \eqref{eq:Ikappa-mf-plus}, and black circles show Monte Carlo estimates from score variances. The right subplot shows information for the change time $t^\ast$: magenta dashed lines show the smooth relaxation term in \eqref{eq:Itstar-mf}, purple solid lines add the jump term \eqref{eq:I-jump} (yielding the bound \eqref{eq:Itstar-mf-bound}), and black circles are empirical precisions $1/\Var(\hat t^\ast)$. Level information grows essentially linearly with slope close to $\Lambda_2$, whereas change-time information saturates rapidly at a level that increases with $\lambda_0$ and depends strongly on the step direction.}
  \label{fig:mf-info-step}
\end{figure}

\begin{figure}[htbp]
    \centering
    \includegraphics[width=\linewidth]{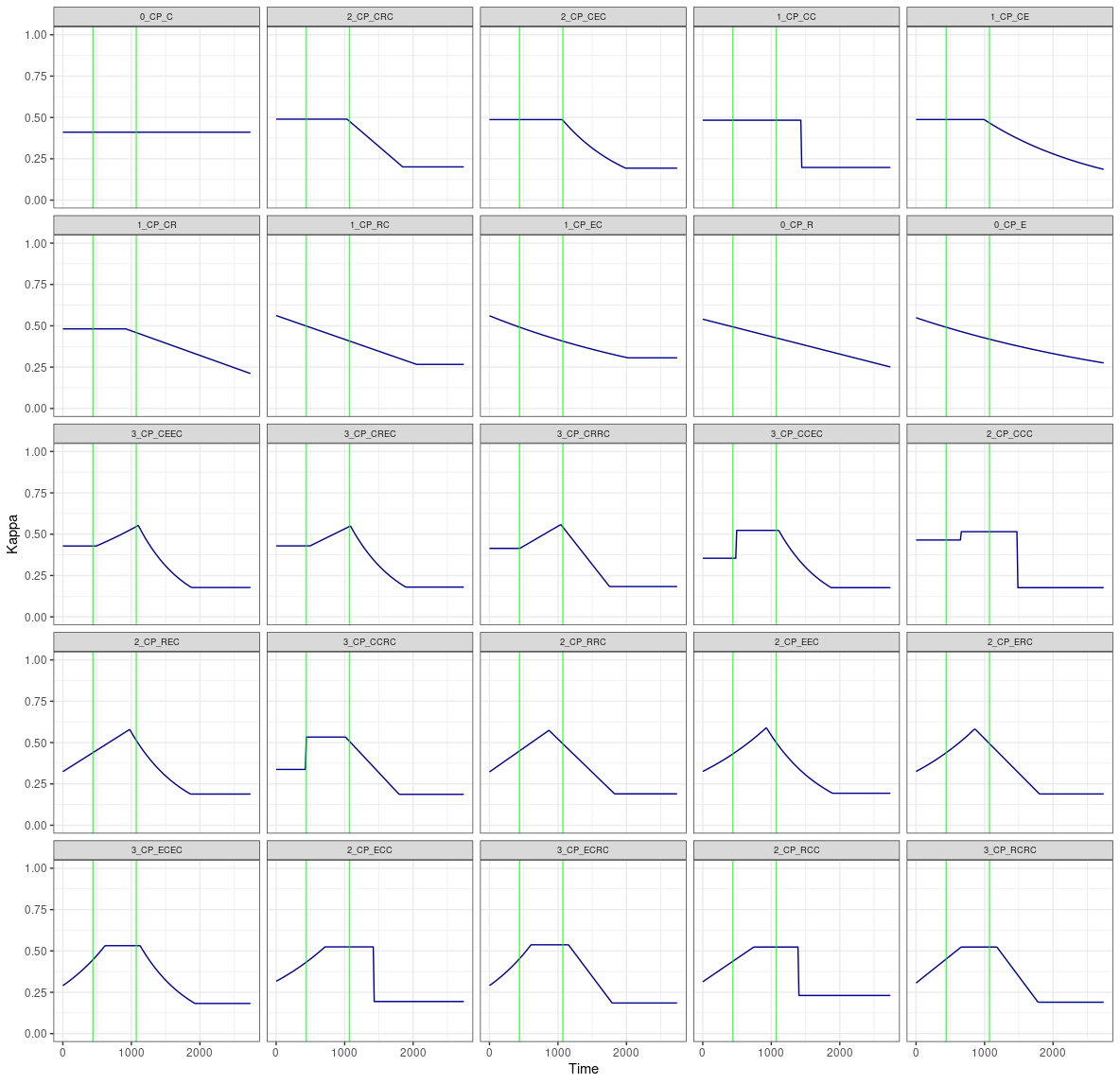}
    \caption{Top 25 models ranked by Integrated Evidence. The models are named by the number of change points and the sequence of segment types (C=Constant, R=Ramp, E=Exponential). Vertical green lines indicate the introduction of PCV7 (day 440) and PCV13 (day 1072). The top-performing models consistently identify a structural break near the second vaccine rollout.}
    \label{fig:model_ranking}
\end{figure}

\end{document}